\begin{document}

\begin{frontmatter}
  \title{A Sahlqvist-style Correspondence Theorem\\ for Linear-time Temporal Logic}
  \author{Rui Li}\footnote{
  %\cbstart 
  Undergraduate Research Opportunities Program (UROP) fund, Department of Computing, Imperial College London, summer 2021}
  %\cbend}
  \address{Sorbonne Université \\ Paris, France}
  \author{Francesco Belardinelli}
  \address{Department of Computing\\ Imperial College London\\ United Kingdom}

  \begin{abstract}
The language of modal logic is capable of expressing first-order conditions on Kripke frames. For instance, the modal formula $(\Box q\to q)$ is valid in exactly the reflexive frames, where reflexivity $\forall x R(x,x)$ is a first-order condition. The classic result by Henrik Sahlqvist identifies a significant class of modal formulas for which first-order conditions -- or Sahlqvist correspondents --  can be find in an effective, algorithmic way. Recent works have successfully extended this classic result to more complex modal languages. In this paper, we pursue a similar line and develop a Sahlqvist-style correspondence theorem for Linear-time Temporal Logic (LTL), which is one of the most widely used formal languages for temporal specification. LTL extends the syntax of basic modal logic with dedicated temporal operators Next $X$ and Until $U$. As a result, the complexity of the class of formulas that have first-order correspondents also increases accordingly. In this paper, we identify a significant class of LTL Sahlqvist formulas built by using modal operators $F$, $G$, $X$, and $U$. The main result of this paper is to prove the correspondence of LTL Sahlqvist formulas to frame conditions that are definable in first-order language. 
  \end{abstract}

  \begin{keyword}
  Linear Temporal Logic; Sahlqvist formula; Correspondence Theory; Kripke frame.
  \end{keyword}
 \end{frontmatter}

%\maketitle

\section{Introduction}

One of the most well-known results in the model theory of modal logic is that modal languages are rich enough to express 
%The language of modal logic is capable of expressing 
(first-order) conditions on Kripke frames. Results along this direction have been known as {\em Correspondence Theory} \cite{Benthem84,BlackburnRV02}. For instance, the modal formula $(\Box q\to q)$ is valid in exactly the reflexive frames, where reflexivity $\forall x R(x,x)$ is a first-order condition. 
Since the 1970's, much research in modal logic has been devoted to identifying classes of formulas for which such first-order correspondents exist, including algorithms for their automatic computation.
The classic result by H.~Sahlqvist \cite{Sahlqvist75} identifies a significant class of modal formulas for which first-order conditions -- or Sahlqvist correspondents --  can be found in an effective, algorithmic way. Since then, correspondence theory has been successfully extended to more complex and expressive modal languages \cite{GabbayHR94,SambinV89,BenthemBH12}.
\paragraph{Contribution.}
In this paper we 
%pursue a similar line and 
develop a Sahlqvist-style correspondence theorem for Linear-time Temporal Logic (LTL), which is nowadays one of the most widely-used formal languages for temporal specification \cite{BaierK08}. LTL extends the syntax of basic modal logic with dedicated temporal operators {\em Next} $X$ and {\em Until} $U$. 
Formulas in LTL are interpreted on infinite words -- or {\em paths} -- representing the execution of a reactive system.
Interestingly, Kamp \cite{Kamp68} proved that every temporal operator on a class of continuous, strict linear orderings that is definable in first-order logic is expressible in terms of {\em Since} $S$ and {\em Until} $U$.
As a result, the complexity of the class of modal formulas that have first-order correspondents also increases accordingly. In this paper, we identify a significant class of LTL formulas built by using temporal operators {\em Eventually} $F$, {\em Always} $G$, {\em Next} $X$, and {\em Until} $U$. 
 To accommodate the enhanced expressiveness, we extend the class of Sahlqvist formulas with some additional conditions. To facilitate the treatment, we introduce the ``intermediate" logic LTL', which is more expressive than LTL, but whose syntax is closer to that of normal modal logics.
Our main result 
%of this paper 
is to prove the correspondence of such Sahlqvist formulas in LTL to frame conditions that are definable in a first-order language. 

\paragraph{Related Work.}
As we mentioned above, Sahlqvist correspondence theorem has been extended in a number of different directions, mainly considering more and more expressive modal languages. For instance, in \cite{GabbayHR94} a correspondence theorem is proved for temporal modal logic,  whereas in \cite{BenthemBH12,BezhanishviliH12} similar results are proved for the $\mu$-calculus and modal fixed-point logic respectively. It has to be remarked that these works extend the proof given in \cite{SambinV89}, rather than Sahlqvist's original result in \cite{Sahlqvist75}.
More recently, correspondence results have been proved for hybrid logics \cite{ConradieR17}, distributive modal logics \cite{GehrkeNV05}, and polyadic modal logics \cite{GorankoV02}.
Some efforts have also been applied to the problem of finding more general and efficient algorithms to compute first-order correspondents of modal formulas \cite{ConradieGV06,ConradiePS17,ZhaoPC19}, including \cite{SambinV89} mentioned earlier.
Still, to the best of our knowledge, no comparable result has been proved for the kind of temporal logics used in the specification and verification of reactive and distributed systems \cite{BaierK08}. We deem such a result of interest to theoreticians and practitioners in modal logics alike.

%Sahlqvist Formula is the class of formulas introduce by Sahlqvist for modal logics with only the conventional modal operators $\Box$ and $\Diamond$ [ ]. The standard translations of these formulas are equivalent to some first order formula; subsequently, their frame conditions can be captured by first order logic. Linear Temporal logic is the extension of normal logic with additional operator $X$ and $U$ [ ]. The interpretation of this logic is ``discrete time'': a linear order $L$ on which if $x,y\in L$ and $x< y$, then the successor of $S(x)\leqslant y$. In LTL, $F$ and $G$ are the equivalents of $\Diamond$ and $\Box$ with meaning ``for some time in the future'' and ``always in the future''. However, LTL is strictly more expressive than normal modal logic with $X\phi$ and $\phi U\varphi$ respectively interpreted as ``$\phi$ happens at the next point in time'' and ``$\phi$ takes place until $\varphi$ takes place''. 

\paragraph{Structure of the Paper.}
%The structrucorganization of the paper is as follows: 
In Sec.~\ref{preliminaries} we introduce the syntax and semantics of LTL as well as the auxiliary logic LTL', and define correspondence between modal and first-order formulas. In Sec.~\ref{sec_Sahlqvist} we define the class of Sahlqvist formulas for LTL and LTL', and provide a few preliminary results. Finally, Sec.~\ref{correspondence} is devoted to the main result of this paper, namely the proof of the correspondence theorem. 

\section{Preliminaries: Linear-time Temporal Logic} \label{preliminaries}

In this section we provide background information about  
%the syntax and semantics of 
Linear-time Temporal Logic (LTL) \cite{BaierK08,HuthR00}.
Specifically, in Sec.~\ref{syntax} we introduce its syntax, as well as the syntax of the auxiliary language LTL'. Then, in Sec.~\ref{semantics} we interpret both languages on infinite system executions.
Finally, in Sec.~\ref{sec_translation} we define their standard translations \cite{BlackburnRV02}.

\subsection{LTL: Syntax} \label{syntax}

We fix a set $Prop$ of atomic propositions (or {\em atoms}) and define the formulas $\phi$ in Linear-time Temporal Logic in Backus-Naur form as follows: 
%Let's first state the syntax of LTL: 
\begin{eqnarray*}
  \phi  & = &   Prop \mid \bot \mid \top \mid \neg \phi \mid \phi \wedge \phi \mid \phi \vee \phi \mid G \phi \mid F \phi \mid X\phi \mid \phi U \phi
\end{eqnarray*}
where $G$ is read ``{\em always}", $F$ ``{\em eventually}", $X$ is the {\em Next} operator, and $U$ is the {\em Until} operator \cite{BaierK08}. The Boolean connectives $\to$ and $\leftrightarrow$ can be introduced as standard.
Operators $F$ and $G$ can be defined in terms of $U$, but for convenience we assume them as primitive.

In this paper, we consider also a variant of LTL, that we call LTL'. %\cbstart 
Let $W$ be a set of possible worlds (which serves as the model of LTL and LTL'). Fix $W$, then we define the syntax of LTL' w.r.t.~this particular $W$ as follows: 
\begin{eqnarray*}
  \phi  & =  &  Prop \mid \bot \mid \top \mid \neg \phi \mid \phi \wedge \phi \mid \phi \vee \phi \mid G \phi \mid F_x \phi \mid  \widehat{G}_{w,w'} \phi \mid X \phi
\end{eqnarray*}
where $w$ and $w'$ ($w\neq w'$) are states in $W$, and $x$ is a variable over states. 
%\cbend 
We will also use the following convention: $w<w'$. If it were the case that $w'<w$, then it suffices to switch the place of $w$ and $w'$ and write $\widehat{G}_{w,w'}$ as $\widehat{G}_{w',w}$. Remark that $\widehat{G}_{w,w}$ is not in the language of LTL'.
\begin{remark}
    Although states are semantical notions, the symbols representing them can be treated syntactically. The difference between the use of symbols in $F_x$ and $\widehat{G}_{w,w'}$ is that $x$ in the former is a variable that ranges over possible states, of which $w$ and $w'$ in the latter are members. In this paper, $x, y, z,\ldots$ would be used to denote variables, whereas $w, w', s, s',\ldots$ would denote states that are fixed in the context. Also, $u,u'$ and $v,v'$ can be used interchangeably, whenever the context is clear.
%\end{remark}
%\begin{remark}
    %In particular, if the model of LTL' is finite, then $G$ can be expressed via $\widehat{G}_{w,v}$ for some $w, v$ that are fixed in the context. 
\end{remark}

\subsection{LTL: Semantics} \label{semantics}

%\FB{We need to add definition of $w$, $h$, etc.}

%\FB{Are we interpreting formulas on worlds, or on paths (i.e., sequences of worlds)?}

To provide a semantics to LTL, we consider {\em transition systems} $T = (S, \to)$, where $S$ is a set of {\em states}, and the {\em transition relation} $\to\ \subseteq S \times S$ is a binary relation on $S$. Normally, the relation $\to$ is assumed to be {\em serial}: for all $s \in S$, there exists $s' \in S$ such that $s \to s'$.  
%We omit the labelling function here as the frame-theoretic model of LTL' can be based on any arbitrary transition system independent of the interpretation of states that we give to it. 
Then, a {\em path} in a transition system is an infinite sequence $s_1, s_2, s_3,\ldots$, where for all $i \in \mathbb{N}$, $s_i \to s_{i+1}$. 

We now define models for LTL. Let $W$ be the set of all 
%possible 
paths in $T$; whereas $\leqslant$, $<$, and $\mathbf{S}$ are all binary relations on $W$, introduced as follows. Let $w = s_1, s_2, s_3, \ldots$ and $v = s'_1, s'_2, s'_3, \ldots$ be 
%two arbitrary 
paths in $T$, then $w \leqslant v$ iff for some $i \geqslant 1$, $s_i = s_1'$ and for all $j > 0$, $s_{i+j} = s'_{1+j}$, that is, $v$ is a {\em subpath} of $w$ starting from some index $i$. Then, $w < v$ iff $w \leqslant v$ and $w \neq v$.
%similarly, $w \leqslant v$ iff for some $i \geqslant 1$, $s_i = s_1'$ and for all $j > 0$, $s_{i+j} = s'_{1+j}$; 
Further,  $\mathbf{S}$ means {\em successor}: $v = \mathbf{S}(w)$ iff for all $i > 0$, $s'_{i} = s_{i+1}$. When the context is clear, we sometimes simply write $R(w,v)$ for $w < v$, $w \leqslant v$ or $v = \mathbf{S}(w)$. 

A {\em model} for LTL is a tuple $M = (T, h)$, where $T$ is a transition system, and $h: Prop \to 2^{S}$ is an {\em assignment function} from atoms to set of states in $S$.
We lift the assignment $h$ from states to paths so that $w \in h(q)$ iff $s_1 \in h(q)$. 
%LTL' formulas to $W$. 
%
%\newtheorem{definition}{Definition}
\begin{definition}[Satisfaction]
Given a model $M$, path $w$, and formula $\phi$ in LTL', the {\em satisfaction relation} $\vDash$ is defined as follows:
\begin{tabbing}
    $(M,w) \vDash q$ \ \ \ \ \ \ \ \ \= iff \ \ \= $w \in h(q)$\\
    $(M,w) \vDash \neg\phi$ \> iff \> $(M,w) \not \vDash \phi$\\
    
    $(M,w) \vDash \phi \wedge \varphi$ \> iff \>  $(M,w) \vDash \phi$ and $(M,w) \vDash \varphi$\\
    
    $(M,w) \vDash \phi \vee \varphi$ \> iff \>  $(M,w) \vDash \phi$ or $(M,w) \vDash \varphi$\\
    
    $(M,w) \vDash G \phi$ \> iff \> for all
    $v \in W$, $w \leqslant v$ implies $(M,v) \vDash \phi$\\

    $(M,w) \vDash F_x \phi$ \> iff \> for some 
    $x \in W$, $w \leqslant x$ and $(M, x) \vDash \phi$ \\

    $(M,w) \vDash \widehat{G}_{w, w'} \phi$ \> iff \>
    for all $u \in W, w \leqslant u < w'$ implies $(M,u) \vDash \phi$\\

    $(M,w) \vDash X \phi$ \> iff \> 
    $v = \mathbf{S}(w)$ and $(M, v) \vDash \phi$ 
\end{tabbing}
\end{definition}

Hereafter we use $w \vDash \phi$ as an abbreviation for $(M, w) \vDash \phi$.
We write $[\phi]^h_w=1$ iff $(M, w) \vDash \phi$ for $M = (T, h)$. For future references, we precisely define below assignments for arbitrary formulas. 
\begin{definition}[Assignment]
%    \cbstart 
    Let $T=(S,\to)$ be a transition system, and $h: Prop \to 2^{W}$ an {\em assignment function} as before. We extend the domain of $h$ from the set of atoms $Prop$ to the set  $Form$ of all formulas: 
    $$h: Form \to 2^W$$
%\cbend
    such that 
    %whenever $\phi$ is not an atom, then 
    $h(\phi)$ is defined as
    $\{w\in S \mid (T,h,w) \vDash\phi\}$.
\end{definition}
%\vspace{0.3cm}

To provide an interpretation for LTL, we replace the clause for $\widehat{G}_{w,w'}$ with a clause for the Until operator $U$, as follows:
%Therefore, the only place where the semantics of LTL differs from that of LTL' is the following clause: 
%
\begin{tabbing}
    $(M, w) \vDash \phi U \phi'$  \ \  \= iff \ \ \= for some  $u \geqslant w$, $(M, u) \vDash \phi'$, and\\
    \> \> for all $v \in W$, $w \leqslant v < u$ implies $(M, v) \vDash \phi$
\end{tabbing}

LTL also replaces $F_x$ with the operator $F$, where the variable path is no longer shown in the syntax. But its semantics remains the same. 
\vspace{0.2cm}

Now it is possible to translate LTL into LTL'. 
\begin{definition} \label{translation}
    Let $Form_{LTL}$ be the class of all LTL formulas and $Form_{LTL'}$ be the class of all LTL' formulas. Let
    %\fb{since $t$ is also used later on to denote terms, we can denote the translation as $\tau$.}
    \begin{align*} 
        \tau: Form_{LTL}& \to Form_{LTL'}
    \end{align*}
    be the translation from LTL to LTL' defined as follows:
\begin{tabbing}
    $q$ \ \ \ \ \ \ \ \ \ \= $\mapsto$ \= $q$\\
    $\neg \phi$ \> $\mapsto$ \> $\neg \tau(\phi)$\\
    $\phi_1 \wedge \phi_2$ \> $\mapsto$ \>  $\tau(\phi_1) \wedge \tau(\phi_2)$\\
    $\phi_1 \vee \phi_2$ \> $\mapsto$  \>  $\tau(\phi_1) \vee \tau(\phi_2)$\\
    $G \phi$ \> $\mapsto$ \> $G\tau(\phi)$\\
    $F \phi$ \> $\mapsto$ \> $F_x\tau(\phi)$\\
    $X \phi$ \> $\mapsto$ \> $X \tau(\phi)$\\
    $\phi_1 U\phi_2$ \> $\mapsto$ \> $F_x (\tau(\phi_2) \land \widehat{G}_{w,x} \tau(\phi_1))$ 
\end{tabbing}
where $x$ is a path variable, and $w$ is the path at which we aim to evaluate the formula.
\end{definition}
\begin{remark}[Variable Convention]

    In the conjunctive and disjunctive clause, if a path variable $x$ appears in both $\tau(\phi_1)$ and $\tau(\phi_2)$, then in $\tau(\phi_1\land\phi_2)$ we replace $x$ in $\tau(\phi_2)$ by another path variable $x'$ that do occur in either $\tau(\phi_1)$ or $\tau(\phi_2)$. 
    
    If $x$ in $\tau(\phi_1 U \phi_2)$ appears in $\tau(\phi_1)$ or $\tau(\phi_2)$, then we replace the occurrences of $x$ in $\tau(\phi_1)$ and $\tau(\phi_2)$ by $x_1$ and $x_2$. 
\end{remark}

%\fb{we need a lemma that states that truth is preserved by translation $t$. Otherwise it is not clear what is the purpose of the translation.}
%\cbstart
\begin{lemma} \label{translation_lemma}
    Let $\tau$ be the translation from LTL to LTL' in Def.~\ref{translation}. Then an LTL formula and its translation w.r.t.~$\tau$ are semantically equivalent.
\end{lemma}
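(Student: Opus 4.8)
The plan is to argue by structural induction on the LTL formula $\phi$, showing that for every model $M$ and path $w$ we have $(M,w)\vDash\phi$ iff $(M,w)\vDash\tau(\phi)$. The base cases ($q$, $\bot$, $\top$) are immediate since $\tau$ fixes atoms, and the Boolean cases ($\neg$, $\wedge$, $\vee$) follow directly from the induction hypothesis together with the matching semantic clauses; the Variable Convention remark guarantees that the bound path variables introduced in the two subtranslations do not clash, so the inductive hypotheses can be applied independently to each conjunct/disjunct. The cases $G\phi$ and $X\phi$ are equally routine: $\tau$ commutes with $G$ and $X$, and the semantic clauses for these operators in LTL and LTL' coincide verbatim, so one only needs to plug in the induction hypothesis at the successor/$\leqslant$-successor path. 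Similarly, $F\phi\mapsto F_x\tau(\phi)$ is handled by observing that the clause for $F$ in LTL and the clause for $F_x$ in LTL' are literally the same existential-over-$\leqslant$ statement, with $x$ merely naming the witnessing path; here one should note that the choice of the fresh variable $x$ is irrelevant to satisfaction.

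The real content is the $U$-clause, where $\tau(\phi_1 U\phi_2)=F_x\big(\tau(\phi_2)\wedge\widehat{G}_{w,x}\tau(\phi_1)\big)$ and $w$ is the evaluation path. I would unfold the right-hand side using the LTL' clauses: $(M,w)\vDash F_x(\tau(\phi_2)\wedge\widehat{G}_{w,x}\tau(\phi_1))$ holds iff there is a path $u$ with $w\leqslant u$, $(M,u)\vDash\tau(\phi_2)$, and $(M,u)\vDash\widehat{G}_{w,u}\tau(\phi_1)$, i.e. for all $v$ with $w\leqslant v<u$ we have $(M,v)\vDash\tau(\phi_1)$. By the induction hypothesis applied to $\phi_1$ and $\phi_2$ (again invoking the Variable Convention so that the renamed variables $x_1,x_2$ inside the subformulas are disjoint from $x$ and from each other), this is equivalent to: there is $u\geqslant w$ with $(M,u)\vDash\phi_2$ and for all $v$ with $w\leqslant v<u$, $(M,v)\vDash\phi_1$ — which is exactly the LTL satisfaction clause for $\phi_1 U\phi_2$. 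A small point to check here is the status of the subscript $w$ in $\widehat{G}_{w,x}$: the clause for $\widehat{G}_{w,w'}$ quantifies over $u$ with $w\leqslant u<w'$ starting precisely from the evaluation path $w$, so the translation correctly records the left endpoint of the "until interval" as the current path; one should also confirm the degenerate case $u=w$ (where the $\widehat{G}$-interval is empty and $\widehat{G}_{w,x}\tau(\phi_1)$ is vacuously true), which matches the LTL clause when $\phi_2$ already holds at $w$.

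The main obstacle I anticipate is purely bookkeeping rather than conceptual: making the variable-renaming machinery precise enough that the inductive hypothesis genuinely applies to the subtranslations occurring inside $F_x(\cdots)$ and inside the scope of $\widehat{G}_{w,x}$. In particular, since $\tau(\phi_1)$ appears under $\widehat{G}_{w,x}$ and may itself contain path variables, one must be sure that the semantics of $\widehat{G}$ does not accidentally bind or capture those variables, and that the side condition $w<x$ implicit in writing $\widehat{G}_{w,x}$ (versus $\widehat{G}_{x,w}$) is consistent with $w\leqslant u$ — the case $w=u$ needing the separate vacuous-interval argument above. Once the renaming conventions are fixed, each inductive step is a one-line equivalence between two semantic clauses, so I would present the $U$-case in full and treat the remaining cases briefly.
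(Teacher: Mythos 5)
Your proposal is correct and follows essentially the same route as the paper: a structural induction in which the only substantive case is $\phi_1 U \phi_2$, resolved by unfolding $F_x(\tau(\phi_2)\wedge\widehat{G}_{w,x}\tau(\phi_1))$ into the existential-over-$\leqslant$ plus bounded-universal form and applying the induction hypothesis to $\phi_1$ and $\phi_2$. Your additional attention to the variable convention and to the degenerate case $u=w$ (where $\widehat{G}_{w,w}$ is excluded from the language) is more careful than the paper's own one-paragraph sketch, but does not change the argument.
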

%\begin{proof}
%The proof makes use of structural induction on the formula. We only consider the case for the LTL formula $\phi = \beta U \psi$, where $\psi$ is also LTL untied \fb{untied formulas are not introduced yet, we have to prove the lemma for generic $\phi_1$ and $\phi_2$}. Let $w$ be any path, and $\phi$ is evaluated at $w$. The standard translation of $\tau(\phi)$ at $w$ is
    %$$ST_w(F_x (\tau(\psi) \land \widehat{G}_{w,x} \tau(\beta))) = $$ 
    %$$\exists x, (w<x \land ST_x(\tau(\psi)) \land (\forall u, w\leqslant u < x\to ST_u(\tau(\beta))))$$
    %By induction hypothesis, $ST_x(\tau(\psi))\Leftrightarrow ST_x(\psi)$ and $ST_u(\tau(\beta))\Leftrightarrow ST_u(\beta)$. So $ST_w(\tau(\phi))\Leftrightarrow ST_w(\phi)$ \fb{the standard translation has not been introduced yet, we better prove this by using the semantics}.
    %By Lemma 2.8, that means $\phi$ and $\tau(\phi)$ are both satisfied or both unsatisfied at $w$. Since $w$ is arbitrary, $\phi$ and $\tau(\phi)$ are semantically equivalent. 
%\end{proof}
%/\cbend
%\cbstart
\begin{proof}
The proof makes use of structural induction on the formula. We only consider the case for the LTL formula $\phi = \phi_1 U \phi_2$. Let $w$ be any path, and $\phi$ is evaluated at $w$. The translation $\tau(\phi)$ of $\phi$ at $w$ is
    %$$ST_w(F_x (\tau(\psi) \land \widehat{G}_{w,x} \tau(\beta))) = $$ 
    $$\exists x (w<x \land \tau(\phi_2) \land \forall u (w\leqslant u < x\to \tau(\phi_1)))$$
    By induction hypothesis, $x\vDash \tau(\phi_2) \Leftrightarrow x\vDash \phi_2$ and $u\vDash \tau(\phi_1) \Leftrightarrow u\vDash \phi_1$. So $w\vDash \tau(\phi) \Leftrightarrow w\vDash \phi$. Since $w$ is arbitrary, $\phi$ and $\tau(\phi)$ are semantically equivalent. 
\end{proof}
%\cbend
%Notice that the translation from LTL into LTL' is possible by expressing $\phi U \psi$ as $F_x\psi\wedge G_x\phi$, where $x$ is a path variable. 

\subsection{Standard Translation and Correspondence} \label{sec_translation}

The standard translation of formulas in LTL' mirrors their semantics.
%, except that we treat the language of metasystem of LTL' as if it is an object system. 
For every atom $q \in Prop$, we introduce a predicate symbol $Q$. For an arbitrary formula $\phi$ in LTL', 
%say it is satisfied at $w$, 
we denote the first-order standard translation of $\phi$ at $w$ as $ST_w(\phi)$, and it is inductively defined as follows: 
\begin{definition}[Standard Translation] \label{sttranslation}
The standard translation $ST_w(\phi)$ of formula $\phi$ at path $w$ is inductively defined as case of $\phi$:
\begin{tabbing}
    $q$ \ \ \ \ \ \ \ \ \ \ \ \ \ \ \= : \ \ \= $Q(w)$\\
    $\neg \phi$ \> : \> $\neg ST_w(\phi)$\\
    $\phi \wedge \varphi$ \> : \>  $ST_w(\phi) \wedge ST_w(\varphi)$\\
    $\phi \vee \varphi$ \> :  \>  $ST_w(\phi) \vee ST_w(\varphi)$\\
    $G \phi$ \> : \> $\forall v (w \leqslant v \to ST_v(\phi))$\\
    $F_x \phi$ \> : \> $\exists x (w \leqslant x \wedge ST_x(\phi))$\\
    $\widehat{G}_{s,s'} \phi$ \> : \> $\forall v (s \leqslant v < s' \to ST_{v}(\phi))$\\
    $X \phi$ \> : \> $ST_{\mathbf{S}(w)}(\phi)$
\end{tabbing}
\end{definition}

To simplify the notation, instead of saying that $ST_w(\phi)$ is the standard translation of $\phi$ at $w$, we say that it is the standard translation of $\phi[w]$. Then, the second-order standard translation of $\phi[w]$ is obtained by prefixing universal quantification for every predicate $Q_1, Q_2, \ldots, Q_k$ in 
%front of 
$ST_w(\phi)$. There is no abbreviated notation for this second-order standard translation. Whenever the context is clear, we will also call it the standard translation. For the most part, we work with the second-order standard translation. %the reason for doing so will be clear by Def.~\ref{sttranslation}. 

Since the models for LTL' and for first-order logic are the same (they are both relational structure), we say that $(M, w)\vDash \phi$, where $\phi$ is a first order formula. However, when it comes to the second-order formulas, the models have to be modified. In second-order logic, quantification over predicates (sets) is allowed, and the domain of a predicate is determined by the assignment $h$, i.e.,
%\cbstart 
$dom(Q) = 
%\{w \in W \mid w \in h(q) \} = 
h(q)$.
%\cbend
Therefore, assignments in transition systems are equivalent to (universal) quantification over predicates in second-order logic. 

\begin{definition}[Correspondence]
    Let $T = (S, \to)$ be a transition system, and $w \in W$. An LTL' formula $\phi(q_1, q_2,\ldots, q_k)$ is said to {\em (locally) correspond} to a formula $\varphi$ in second order logic at $w$ whenever $\phi$ are $\varphi$ are 
%\cbstart 
    both evaluated to be true
%\cbend 
    at $w$ in $T$.
\end{definition}

The following lemma shows why local correspondence is defined the way it is. A proof can be obtain by a straightforward induction on the structure of formula $\phi$.  %\fb{since we have space, you might consider adding at least a sketch of proof.} \rui{ I actually don't think we have space, since there will be conclusion and examples to be added. Besides, this lemma is just too straightforward. I will add something for the next comment.}
%However, we do not present a proof. 
%
\begin{lemma}
An LTL' formula $\phi(q_1, q_2, \ldots, q_k)$ (locally) corresponds to $\forall Q_1 \forall Q_2 \ldots \forall Q_k ST_w(\phi)$ 
%\cbstart 
at $w$, 
%\cbend 
where $ST_w(\phi)$ is the (first-order) standard translation of $\phi[w]$.
%For any LTL' frame $\xi$ and any state in $\xi$, $(\xi, w) \vDash \phi$ iff $\xi \vDash \forall Q_1, \forall Q_2,\ldots, \forall Q_k, ST_w(\phi)$ where $ST_w(\phi)$ is the standard translation for $\phi(q_1, q_2,\ldots, q_k)[w]$. 
\end{lemma}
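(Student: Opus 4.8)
The plan is to first establish a pointwise \emph{standard translation lemma}: for every model $M = (T,h)$, viewed simultaneously as a first-order relational structure with domain $W$, relations $\leqslant$, $<$, $\mathbf{S}$, and with each predicate $Q_i$ interpreted as $h(q_i)$, and for \emph{every} path $w \in W$,
$$(M,w)\vDash\phi \quad\Longleftrightarrow\quad M \vDash ST_w(\phi).$$
The quantification over all $w$ must be built into the induction hypothesis, since the clauses for $G$, $F_x$ and $\widehat{G}_{w,w'}$ move the point of evaluation.

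Then I would carry out the structural induction on $\phi$. The atomic case is immediate from $dom(Q)=h(q)$ together with the lifting of $h$ to paths; $\bot$, $\top$ and the Boolean cases follow directly from the induction hypothesis and the matching clauses of the satisfaction relation and of Def.~\ref{sttranslation}. For $G\phi$ one unfolds $(M,w)\vDash G\phi$ as ``for all $v$, $w\leqslant v$ implies $(M,v)\vDash\phi$'', applies the induction hypothesis at each such $v$ to replace $(M,v)\vDash\phi$ by $M\vDash ST_v(\phi)$, and recognises the result as $M\vDash\forall v\,(w\leqslant v\to ST_v(\phi))$; the cases $F_x\phi$ and $\widehat{G}_{s,s'}\phi$ are analogous, the latter simply carrying the fixed states $s,s'$ as first-order constants and using the bounded range $s\leqslant v<s'$. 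For $X\phi$ one uses that seriality of $\to$ makes $\mathbf{S}$ a total function on $W$, so the term $\mathbf{S}(w)$ denotes a unique path; then $(M,w)\vDash X\phi$ iff $(M,\mathbf{S}(w))\vDash\phi$ iff, by the induction hypothesis, $M\vDash ST_{\mathbf{S}(w)}(\phi)$, which is exactly the translation clause.

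Finally I would lift the pointwise equivalence to the level of transition systems and second-order logic, which is where the only real subtlety lies. By the definition of correspondence we must show that $\phi$ and $\forall Q_1\cdots\forall Q_k\, ST_w(\phi)$ are evaluated to be true at $w$ in exactly the same transition systems $T$. Truth of the modal formula $\phi$ at $w$ in $T$ means that $(T,h,w)\vDash\phi$ for every assignment $h$; and an assignment $h$ restricted to the atoms $q_1,\dots,q_k$ is precisely a choice of subsets of $W$ interpreting the predicates $Q_1,\dots,Q_k$. Hence ``$(T,h,w)\vDash\phi$ for all $h$'' is, by the standard translation lemma, equivalent to ``$ST_w(\phi)$ is true at $w$ in $T$ under every interpretation of $Q_1,\dots,Q_k$'', i.e.\ to the truth of $\forall Q_1\cdots\forall Q_k\, ST_w(\phi)$ at $w$ in $T$. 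This identification of assignments in transition systems with universal second-order quantification over the $Q_i$ — already flagged in the discussion preceding the statement — is the conceptual crux; the structural induction itself is routine.
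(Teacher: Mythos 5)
Your proposal is correct and follows exactly the route the paper intends: the paper only remarks that the lemma follows "by a straightforward induction on the structure of formula $\phi$," and your pointwise standard-translation induction, followed by the identification of assignments with second-order quantification over the $Q_i$ (which the paper flags in the discussion preceding the lemma), is precisely that argument spelled out.
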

%\cbstart
\begin{remark}
    In light of this lemma, we will be using semantics and standard translation interchangeably in this paper. 
\end{remark}
%\cbend
%We can write $(h, w) \vDash \phi$ when for all models in which the assignment is $h$, $(M, w)\vDash \phi$. We say that $\phi$ locally corresponds to $ST(\phi)$ if 

The main result we prove in this paper can be stated informally as follows: there is a collection of LTL formulas $\phi$, such that for all paths $w$, the local correspondent $\phi[w]$ of $\phi$ at $w$ can be expressed as a first-order formula. This is the basic content of Sahlqvist correspondence theorem, which will be stated later on in more precise terms. Note that, although the standard translation is only defined for LTL', the translation for LTL and its semantics can be defined in a similar manner, where the main difference is the following clause: 
\begin{tabbing}
    $ST_w(\phi U \phi'$) \= = \= $\exists u, w\leqslant u \wedge ST_u(\psi)\wedge (\forall v, w\leqslant v < u\to ST_v(\phi'))$
\end{tabbing}

\section{Sahlqvist Formulas for LTL} \label{sec_Sahlqvist}

In this section, we introduce two particular types of formulas that play key roles in the construction of Sahlqvist formulas: boxed formulas and negative formulas. We prove the monotonicity theorem in Sec.~\ref{negative} and introduce Sahlqvist formulas for LTL in Sec.~\ref{Sahlqvist}.

\subsection{Boxed Formulas}

In standard modal logic, boxed formulas are defined as a sequence of boxes $\Box$ followed by an atomic formula, i.e., they have the form $\Box \ldots \Box q$ for a possible empty sequence of boxes. The sequence of boxes can be denoted as $\Box^n$, for $n \in \mathbb{N}$, whose semantics is similar to the one for a single box: $w \vDash \Box^{n} q$ iff for all $v$, $R^n(w, v)$ implies $v \vDash q$, where $R^n$ is not difficult to construct (see Lemma~\ref{boxed}).

Similarly for LTL', the syntactic operators having universally quantified implication as semantics can be integrated into the LTL' boxed formulas for the same reason. 
%\cbstart 
We denote an arbitrary boxed formula as $\boxplus^{n} q=\boxplus\ldots\boxplus q$, where each $\boxplus$ is a distinct element from $\{G, \widehat{G}_{w,w'}, X\}$ (i.e., the set of boxed operators). Now we define the corresponding accessibility relation. 

\begin{definition}[Accessibility Relation $R_{\boxplus^n}$] \label{accessibility}
We define the accessibility relation $R_{\boxplus^n}$ by induction on $n\in \mathbb{N}$.

Base case: if $n = 0$, i.e.  $\boxplus^n q = q$, then  $R^0(w,v)$ iff $w = v$.

Inductive cases: let $R_{\boxplus^n}$ be defined, then 
\begin{itemize}
\item If $\boxplus^{n+1} q = G \boxplus^{n}q$, then $R_{\boxplus^{n+1}}(w,v)$ iff for some $u \in W$, $w \leqslant u$ and $R_{\boxplus^n}(u,v)$. 

\item If $\boxplus^{n+1} q = X \boxplus^{n}q$, then $R_{\boxplus^{n+1}}(w,v)$ iff $R_{\boxplus^n}(\mathbf{S}(w),v)$.

\item If $\boxplus^{n+1} q = \widehat{G}_{s,s'} \boxplus^{n}q$, then $R_{\boxplus^{n+1}}(w,v)$ iff for some $u \in W$, $s \leqslant u < s'$ and $R_{\boxplus^n}(u,v)$. 
\end{itemize}
Whenever the context is clear, we use $R^n$ to denote $R_{\boxplus^n}$. %\cbend
\end{definition}

By Def.~\ref{accessibility} we can prove the following auxiliary result concerning boxed formulas.
\begin{lemma}[Boxed Formulas Lemma] \label{boxed}
    Let $\boxplus^n q$ be an LTL' boxed formula with $n$ boxed operators appearing in front of atom $q$ (with $n$ possibly equal to 0). 
    Then 
    %there exists a relation $R$ such that 
    $w\vDash \boxplus^n q$ iff for all $v \in W$, $R^n(w,v)$ implies $v \vDash q$.
\end{lemma}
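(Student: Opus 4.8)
The plan is to prove the Boxed Formulas Lemma by induction on $n$, mirroring the inductive definition of $R_{\boxplus^n}$ in Def.~\ref{accessibility}. The base case $n=0$ is immediate: $\boxplus^0 q = q$, and $R^0(w,v)$ holds iff $w=v$, so $w\vDash q$ iff for all $v$ with $R^0(w,v)$ (i.e.\ $v=w$) we have $v\vDash q$.

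For the inductive step, assume the claim holds for $n$, i.e.\ for every boxed formula $\boxplus^n q'$ we have $u\vDash\boxplus^n q'$ iff for all $v$, $R^n(u,v)$ implies $v\vDash q'$. Now take a boxed formula with $n+1$ operators, $\boxplus^{n+1}q$, and split into the three cases according to which operator sits in front. If $\boxplus^{n+1}q = G\,\boxplus^n q$, then by the satisfaction clause for $G$, $w\vDash G\,\boxplus^n q$ iff for all $u$ with $w\leqslant u$ we have $u\vDash\boxplus^n q$; by the induction hypothesis this is iff for all such $u$ and all $v$ with $R^n(u,v)$ we have $v\vDash q$; unfolding the quantifiers, this says: for all $v$, if there exists $u$ with $w\leqslant u$ and $R^n(u,v)$, then $v\vDash q$ — which by Def.~\ref{accessibility} is exactly ``for all $v$, $R^{n+1}(w,v)$ implies $v\vDash q$''. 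The case $\boxplus^{n+1}q = X\,\boxplus^n q$ is analogous but simpler: $w\vDash X\,\boxplus^n q$ iff $\mathbf{S}(w)\vDash\boxplus^n q$, which by the hypothesis is iff for all $v$ with $R^n(\mathbf{S}(w),v)$ we have $v\vDash q$, i.e.\ for all $v$ with $R^{n+1}(w,v)$ we have $v\vDash q$. The case $\boxplus^{n+1}q = \widehat{G}_{s,s'}\,\boxplus^n q$ goes exactly like the $G$ case, using the satisfaction clause for $\widehat{G}_{s,s'}$ (for all $u$ with $s\leqslant u < s'$, $u\vDash\boxplus^n q$) together with the corresponding clause of Def.~\ref{accessibility}; note that here the ``$u$'' ranges over the interval determined by the fixed states $s,s'$ rather than by $w$, but this is harmless since those states are fixed in the context.

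The only genuinely delicate point — and where I would be most careful — is the bookkeeping of the nested quantifiers when commuting an existential (from the definition of $R^{n+1}$ in the $G$ and $\widehat G$ cases) past the universal over $v$ coming from the induction hypothesis. The equivalence ``$\forall u\,(P(u)\to\forall v\,(Q(u,v)\to\varphi(v)))$ iff $\forall v\,((\exists u\,(P(u)\wedge Q(u,v)))\to\varphi(v))$'' is a routine logical manipulation, but it is the single step that makes the induction go through, so I would state it explicitly rather than leave it implicit. Everything else is a mechanical unfolding of the satisfaction clauses of Def.~2 against the clauses of Def.~\ref{accessibility}, so no further obstacle is expected.

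\begin{proof}
We argue by induction on $n$.

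\emph{Base case.} If $n = 0$ then $\boxplus^0 q = q$ and, by Def.~\ref{accessibility}, $R^0(w,v)$ holds iff $w = v$. Hence $w \vDash \boxplus^0 q$ iff $w \vDash q$ iff for all $v \in W$ with $R^0(w,v)$ we have $v \vDash q$.

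\emph{Inductive step.} Suppose the claim holds for $n$. Let $\boxplus^{n+1} q$ be a boxed formula with $n+1$ boxed operators. We distinguish three cases according to the outermost operator.

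Case $\boxplus^{n+1} q = G\,\boxplus^n q$. By the satisfaction clause for $G$, $w \vDash G\,\boxplus^n q$ iff for all $u \in W$ with $w \leqslant u$ we have $u \vDash \boxplus^n q$. By the induction hypothesis, this holds iff for all such $u$ and all $v \in W$ with $R^n(u,v)$ we have $v \vDash q$. Commuting the quantifiers, this is equivalent to: for all $v \in W$, if there is $u \in W$ with $w \leqslant u$ and $R^n(u,v)$, then $v \vDash q$. By Def.~\ref{accessibility}, the antecedent is precisely $R^{n+1}(w,v)$, so $w \vDash \boxplus^{n+1} q$ iff for all $v \in W$, $R^{n+1}(w,v)$ implies $v \vDash q$.

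Case $\boxplus^{n+1} q = X\,\boxplus^n q$. By the satisfaction clause for $X$, $w \vDash X\,\boxplus^n q$ iff $\mathbf{S}(w) \vDash \boxplus^n q$. By the induction hypothesis, this holds iff for all $v \in W$ with $R^n(\mathbf{S}(w),v)$ we have $v \vDash q$, which by Def.~\ref{accessibility} is exactly: for all $v \in W$, $R^{n+1}(w,v)$ implies $v \vDash q$.

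Case $\boxplus^{n+1} q = \widehat{G}_{s,s'}\,\boxplus^n q$. By the satisfaction clause for $\widehat{G}_{s,s'}$, $w \vDash \widehat{G}_{s,s'}\,\boxplus^n q$ iff for all $u \in W$ with $s \leqslant u < s'$ we have $u \vDash \boxplus^n q$. By the induction hypothesis, this holds iff for all such $u$ and all $v \in W$ with $R^n(u,v)$ we have $v \vDash q$; commuting the quantifiers, iff for all $v \in W$, the existence of $u$ with $s \leqslant u < s'$ and $R^n(u,v)$ implies $v \vDash q$. By Def.~\ref{accessibility} this antecedent is $R^{n+1}(w,v)$, so again $w \vDash \boxplus^{n+1} q$ iff for all $v \in W$, $R^{n+1}(w,v)$ implies $v \vDash q$.

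In all three cases the claim holds for $n+1$, completing the induction.
\end{proof}
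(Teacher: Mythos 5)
Your proof is correct and follows essentially the same route as the paper's: induction on $n$, case analysis on the outermost boxed operator, with the crux being the commutation of the existential quantifier in the definition of $R^{n+1}$ past the universal over $v$ from the induction hypothesis. If anything, your version is slightly more complete (you write out all three cases and state the quantifier-exchange equivalence explicitly, where the paper only details the $\widehat{G}_{s,s'}$ case), but the underlying argument is identical.
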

\begin{proof}
    We prove this lemma by induction on $n$. The base case for $n = 0$ is immediate: $w \vDash q$ iff for all $v$, $w=v$ implies $v\vDash q$, that is $R^0(w,v)$. Now suppose that the lemma holds for an arbitrary $n$, i.e., $w\vDash \boxplus^n q$ iff for all $v \in W$, $R^n(w,v)$ implies $v \vDash q$. We have to show that $w\vDash \boxplus^{n+1} q$ iff for all $v \in W$, $R^{n+1}(w,v)$ implies $v \vDash q$. We discuss by case the options for the first boxed operator $\boxplus_0$ in $\boxplus^{n+1} q$. 
    
    For 
    %the case where 
    $\boxplus_0 = \widehat{G}_{s,s'}$, $w \vDash \boxplus_0\boxplus^n q$ iff for every $v$, $s\leqslant v< s'$ implies $v\vDash \boxplus^n q$. By induction hypothesis, $w\vDash \boxplus_0\boxplus^n q$ iff for every $v$, $s\leqslant v<s'$ implies that for every $u$ such that $R^n(v,u)$, $u\vDash q$. We want to show that this is equivalent to: for all $u$, $R^{n+1}(w,u)$ implies $u\vDash q$. Suppose $w\vDash \boxplus_0\boxplus^n q$ is the case; fix $u_0$. Let $v_0$ be a path such that $s\leqslant v_0 < s'$, if $R^n(v_0,u_0)$, then $u_0\vDash q$ by assumption. Since $u_0$ is arbitrary, we get $R^{n+1}(w,u_0)$ for every $u_0$. Conversely, assume that for every $u$, if for any $v$ such that $s\leqslant v< s'$ and $R^n(v,u)$, then $u\vDash q$. Fix $v_0$ such that $s\leqslant v_0< s'$. Then take an arbitrary $u_0$. If $R^n(v_0,u_0)$, then by assumption, $u_0\vDash q$. Since $v_0$ is arbitrary, for every $v_0$, $s\leqslant v_0< s'$ and $R^n(v_0,u_0)$ imply $u_0\vDash q$ for arbitrary $u_0$, as desired. This concludes the case for $\boxplus_0=\widehat{G}_{s,s'}$. 
    The proofs for the cases $\boxplus_0= G$ and $\boxplus_0= X$ are similar.
\end{proof}
\begin{remark}
    This lemma shows that the standard translation of every boxed formula $\boxplus^n q$ can be written in the form of $\forall v, R(w,v)\to Q(v)$ using a unique relation $R$. This construction will be invaluable in defining the minimal assignment for Sahlqvist formulas. 
\end{remark}

\subsection{Negative Formulas} \label{negative}

Similarly to standard modal logic, LTL' {\em positive formulas} $\phi$ can be defined as the ones constructed from atoms using $\wedge$, $\vee$, $G$, $F_x$, $\widehat{G}_{w,w'}$, $X$ only:
\begin{align*}
  \phi  =    Prop \mid \bot \mid \top \mid \phi \wedge \phi \mid \phi \vee \phi \mid G \phi \mid F_x \phi \mid \widehat{G}_{w,w'} \phi \mid X\phi
\end{align*}

An LTL' {\em negative formula} has one of the two following forms: 
\begin{enumerate}
    \item $\neg \phi$, where $\phi$ is an LTL' positive formula; 
    \item $\widehat{G}_{w,w'} N$, where $N$ is an LTL' negative formula.
\end{enumerate}

For example, $G F_x q$ is a positive formula, $\widehat{G}_{w,w'}\neg (q_1 \wedge q_2)$ is a negative formula; whereas $F_x(\widehat{G}_{w,x} q_1 \wedge \neg q_2)$ and $\neg X \neg X q$ are neither positive nor negative.

\begin{remark}
    Although negative formulas are defined to be a syntactic notion, the proof of the correspondence theorem is semantical. Therefore, whenever possible, if a formula is semantically equivalent to a negative formula, then we shall also call this formula negative. For example, if $N$ is a negative formula, then $GN$ is also a negative formula. 
    %\fb{we need to be careful with something like that.} \rui{ Yeah, I know, I will think about how to make it precise.}
\end{remark}

\begin{lemma}[Monotonocity] \label{monotonicity}
    Let $\phi$ be an LTL' positive formula, $q_1, \ldots, q_k$ be the atoms appearing in $\phi$, and $h_1$ and $h_2$ be assignments. 
    
    If for all $q_j$, $h_1(q_j) \subseteq h_2(q_j)$, then $h_1(\phi) \subseteq h_2(\phi)$. %\fb{did we introduce the notation $h_1(\phi)$, $h_2(\phi)$?}
\end{lemma}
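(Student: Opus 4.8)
## Proof Proposal

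The plan is to proceed by structural induction on the positive formula $\phi$, showing that the inclusion $h_1(\phi) \subseteq h_2(\phi)$ is preserved by every formula-building construct available in the positive fragment. By the remark following the definition of assignments, $h_i(\phi) = \{w \in W \mid (T, h_i, w) \vDash \phi\}$, so the claim amounts to: if $h_1(q_j) \subseteq h_2(q_j)$ for every atom $q_j$ occurring in $\phi$, then for all paths $w$, $(T,h_1,w)\vDash\phi$ implies $(T,h_2,w)\vDash\phi$.

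First I would handle the base cases. For $\phi = q_j$ an atom, the hypothesis gives $h_1(q_j)\subseteq h_2(q_j)$ directly. For $\phi = \top$ both sets equal $W$, and for $\phi = \bot$ both are empty; in each case the inclusion is trivial. For the inductive step I would assume the statement holds for positive formulas $\psi$, $\psi_1$, $\psi_2$ (with the monotonicity hypothesis restricted to the atoms actually occurring in them, which are among the $q_j$), and then treat each connective in turn. For $\phi = \psi_1 \wedge \psi_2$: if $w \vDash_{h_1} \psi_1 \wedge \psi_2$ then $w\vDash_{h_1}\psi_1$ and $w\vDash_{h_1}\psi_2$, so by the induction hypothesis $w\vDash_{h_2}\psi_1$ and $w\vDash_{h_2}\psi_2$, hence $w\vDash_{h_2}\psi_1\wedge\psi_2$; the case of $\vee$ is analogous, picking the disjunct that holds. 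For $\phi = G\psi$: if $w\vDash_{h_1} G\psi$ then for every $v$ with $w \leqslant v$ we have $v\vDash_{h_1}\psi$, so by induction hypothesis $v\vDash_{h_2}\psi$ for every such $v$, giving $w\vDash_{h_2} G\psi$; the case $\widehat{G}_{w_0,w_0'}\psi$ is identical with the bounded quantifier $w_0 \leqslant v < w_0'$ in place of $w \leqslant v$, and the case $F_x\psi$ is the same but with the existential witness $x$ carried over unchanged. For $\phi = X\psi$: if $w\vDash_{h_1} X\psi$ then $v = \mathbf{S}(w)$ and $v\vDash_{h_1}\psi$, and since $\mathbf{S}$ does not depend on the assignment, the same $v$ satisfies $v\vDash_{h_2}\psi$ by induction hypothesis, so $w\vDash_{h_2} X\psi$.

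The point to be slightly careful about — rather than a genuine obstacle — is the bookkeeping of which atoms the induction hypothesis is applied to: a subformula $\psi_i$ contains only a subset of the atoms $q_1,\dots,q_k$, so the hypothesis "$h_1(q_j)\subseteq h_2(q_j)$ for all atoms of $\psi_i$" is an immediate consequence of the global hypothesis, and the induction goes through cleanly. (An alternative, also legitimate, is simply to state the induction hypothesis for all atoms in $Prop$ and never restrict, since the inclusion for atoms not occurring in $\phi$ is never used.) No case requires the negation or implication clauses, which is exactly why positivity is essential: negation would flip the inclusion. This confirms the claim, and — via the remark identifying semantics with standard translation — it shows that the standard translation of a positive formula is monotone in each predicate $Q_j$, the property that will be invoked when computing the minimal valuation for Sahlqvist antecedents.
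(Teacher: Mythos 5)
Your proof is correct and follows essentially the same route as the paper's: a structural induction on the positive formula, with the base case for atoms immediate from the hypothesis and each connective in $\{\wedge,\vee,G,F_x,\widehat{G}_{w,w'},X\}$ handled by unfolding its satisfaction clause and applying the induction hypothesis pointwise. The only difference is that you spell out the semantic unfolding that the paper dismisses as ``easy to see,'' and you explicitly note the bookkeeping about which atoms occur in which subformula; both are harmless elaborations.
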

\begin{proof}
    The proof is by induction on the structure of $\phi$. The base case for $\phi = q$ is immediate.
    %, and there is nothing to prove. 
    Now suppose that $\phi_1(q_1, \ldots , q_k)$ and $\phi_2(q_1,  \ldots, q_k)$ are two LTL' positive formulas that satisfy the statement of the lemma. We show that $\phi_3(q_1, \ldots, q_k)$, which is built from $\phi_1$ and $\phi_2$ using one of the operators from $\{\wedge, \vee, G, F_x, \widehat{G}_{w,w'}, X\}$ also satisfies the statement. If either $\phi_3 = G \phi_1$, or $\phi_3 = \widehat{G}_{w,w'} \phi_1$, or $\phi_3 = F_x \phi_1$ or $\phi_3 = X \phi_1$, since $h_1(\phi_1) \subseteq h_2(\phi_1)$, it is easy to see that $h_1(\phi_3) \subseteq h_2(\phi_3)$. It is also immediate to check that if $h_1(\phi_1) \subseteq h_2(\phi_1)$ and $h_1(\phi_2) \subseteq h_2(\phi_2)$, then $h_1(\phi_1 \wedge \phi_2) \subseteq h_2(\phi_1 \wedge \phi_2)$ and $h_1(\phi_1 \vee \phi_2) \subseteq h_2(\phi_1 \vee \phi_2)$. This concludes the induction. 
\end{proof}

\begin{corollary} \label{conversemonotonicity}
    Let $N$ be an arbitrary LTL' negative formula, $q_1, q_2,\ldots, q_k$ are the atomic variables appearing in $N$. Let $h_1$ and $h_2$ be two random assignments. If for all $q_j$, $h_1(q_j) \subseteq h_2(q_j)$, then $h_2(N) \subseteq h_1(N)$.
\end{corollary}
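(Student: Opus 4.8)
The plan is to derive this as a direct consequence of the Monotonicity Lemma (Lemma~\ref{monotonicity}) by structural induction on the negative formula $N$, exploiting the fact that negation reverses set inclusion. Recall that an LTL' negative formula has one of two shapes: $\neg\phi$ with $\phi$ positive, or $\widehat{G}_{w,w'}N'$ with $N'$ negative. So I would set up the induction on exactly this two-case definition rather than on the full formula grammar.

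For the base case $N = \neg\phi$ with $\phi$ positive: assuming $h_1(q_j)\subseteq h_2(q_j)$ for all $j$, Lemma~\ref{monotonicity} gives $h_1(\phi)\subseteq h_2(\phi)$. Since $h_i(\neg\phi) = W\setminus h_i(\phi)$ (by the satisfaction clause for $\neg$), taking complements reverses the inclusion, so $h_2(\neg\phi) = W\setminus h_2(\phi)\subseteq W\setminus h_1(\phi) = h_1(\neg\phi)$, which is exactly what we want. For the inductive case $N = \widehat{G}_{w,w'}N'$ with $N'$ negative, the induction hypothesis gives $h_2(N')\subseteq h_1(N')$; I then need to check that the operator $\widehat{G}_{w,w'}$ is monotone in the sense that $h_2(N')\subseteq h_1(N')$ implies $h_2(\widehat{G}_{w,w'}N')\subseteq h_1(\widehat{G}_{w,w'}N')$. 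This is immediate from the semantics: $v\in h(\widehat{G}_{w,w'}\psi)$ iff every $u$ with $v\leqslant u < w'$ (appropriately, $w\leqslant u<w'$ per the clause) lies in $h(\psi)$, so shrinking $h(\psi)$ only shrinks $h(\widehat{G}_{w,w'}\psi)$ --- the same ``easy to see'' reasoning already used for the box-like operators in the proof of Lemma~\ref{monotonicity}.

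I do not anticipate a genuine obstacle here: the corollary is essentially the contrapositive-flavoured mirror of the lemma, and the only content is (i) remembering that the negative-formula grammar is the two-clause one above, not arbitrary formulas, and (ii) noting that $\widehat{G}_{w,w'}$ preserves inclusions in the ``same direction,'' so once a single negation has flipped the direction it stays flipped. The mild care-point is purely bookkeeping: making sure the induction is run on the negative-formula inductive structure (so that every negative formula is reached) and that the valuation extension $h$ to all formulas from the Assignment definition is used consistently, in particular that $h(\neg\phi)$ is the complement of $h(\phi)$ relative to the carrier set. With those observations in place the argument is a few lines.

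\begin{proof}
We argue by induction on the structure of the negative formula $N$, using the two-clause definition of negative formulas.

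\emph{Case $N = \neg\phi$ with $\phi$ positive.} Suppose $h_1(q_j)\subseteq h_2(q_j)$ for all $j$. By Lemma~\ref{monotonicity}, $h_1(\phi)\subseteq h_2(\phi)$. Since $h_i(\neg\phi) = W\setminus h_i(\phi)$, taking complements reverses the inclusion: $h_2(N) = W\setminus h_2(\phi)\subseteq W\setminus h_1(\phi) = h_1(N)$.

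\emph{Case $N = \widehat{G}_{w,w'}N'$ with $N'$ negative.} By the induction hypothesis, $h_2(N')\subseteq h_1(N')$. By the satisfaction clause for $\widehat{G}_{w,w'}$, a path $v$ satisfies $\widehat{G}_{w,w'}\psi$ exactly when every $u$ with $w\leqslant u < w'$ satisfies $\psi$; hence shrinking the valuation of $\psi$ can only shrink the valuation of $\widehat{G}_{w,w'}\psi$. Applying this to $\psi = N'$ yields $h_2(\widehat{G}_{w,w'}N')\subseteq h_1(\widehat{G}_{w,w'}N')$, i.e.\ $h_2(N)\subseteq h_1(N)$.

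This completes the induction.
\end{proof}
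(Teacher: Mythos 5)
Your proof is correct and follows essentially the same route as the paper: an induction on the two-clause structure of negative formulas, with Lemma~\ref{monotonicity} plus complementation handling the base case $\neg\phi$ and the inclusion-preserving behaviour of $\widehat{G}_{w,w'}$ handling the inductive case. If anything your inductive step is slightly cleaner than the paper's, which unfolds the standard translation of $\widehat{G}_{s,s'}N'$ as $\neg\exists y(\ldots\wedge ST_y(\neg N'))$ and appeals to the monotonicity lemma on $\neg N'$ viewed as a ``positive fragment,'' whereas you invoke the induction hypothesis on $N'$ directly and avoid that detour.
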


\begin{proof}
    If $N$ is of the first form of LTL' negative formula, then the statement follows directly from the Lemma~\ref{monotonicity}. Now suppose $N$ is of the second form, that is, $N = \widehat{G}_{s,s'} N'$, where $N'$ is negative. Let $w$ be any path, then $ST_w(N)$ is 
    $$\neg \exists y (s \leqslant y < s' \wedge ST_y(\neg N'))$$
    Remark that the part in the scope of the negation is in fact a positive fragment in the interpretation of LTL' formulas. By monotonocity lemma, if for all $q_j$ occurring in $N'$, $h_1(q_j) \subseteq h_2(q_j)$, then $h_1(\neg N') \subseteq h_2(\neg N')$. In other words, if $(T,h_1,y)\vDash ST_y(\neg N')$, then $(T,h_2,y)\vDash ST_y(\neg N')$. Therefore, if there is a path $y$ between $s$ and $s'$ such that $y\vDash ST_y(\neg N')$ and $y\in h_1(\neg N')$, then there is also such a path for $h_2$. So if $(T,h_1,x)\vDash \exists y (s \leqslant y < s' \wedge ST_y(\neg N'))$, then $(T,h_2,x)\vDash \exists y (s \leqslant y < s' \wedge ST_y(\neg N'))$. It follows that $h_2(N) \subseteq h_1(N)$, as desired.
\end{proof}

\subsection{Sahlqvist Formulas} \label{Sahlqvist}

The main goal of this paper is to find a significant class of Sahlqvist formulas for LTL, we therefore define them here. Then, we will show that this construction can be simplified by using the auxiliary language LTL'.

%Of course, instead of using boxed and negative formulas for LTL', we shall use the ones for LTL. 
A formula $A_{LTL}$ is an LTL {\em boxed formula} if it is 
%in the form of 
a sequence of boxes followed by an atom, where each element of the sequence belongs to $\{X, G\}$. A formula is an LTL {\em positive} formula if it can be constructed from all logical symbols and modal operators of LTL except negation; a formula $N_{LTL}$ is an LTL {\em negative} formula if it is the negation of an LTL positive formula.
%\fb{this is a bit of a repetition, as we've just defined boxed and positive/negative formulas. Also, it is not entirely correct for negative formulas.}

We now define LTL Sahlqvist formulas. 
%
%\fb{why defining Sahlqvist formulas in this way?}
%
\begin{definition}[LTL Sahlqvist Formulas] \label{defSahlqvist}
Suppose $\beta$ is an LTL boxed formula or negative formula. Then we define LTL {\em untied formula} as follows: 
\begin{eqnarray*}   
    \phi & = & A_{LTL} \mid N_{LTL} \mid \beta U \phi \mid \phi \land \phi 
\end{eqnarray*} 

The LTL {\em Sahlqvist formulas} are the conjunction of negations of LTL untied formulas. 
\end{definition}
\begin{remark}
    In the definition of LTL untied formula, $F\phi$ can be retrieved using $\top U \phi$.
\end{remark}
\vspace{0.2cm}

As for LTL', its Sahlqvist formulas are defined as follows: 
\begin{definition}[LTL' Sahlqvist Formulas]
    An LTL' untied formula is constructed from LTL' boxed formulas and LTL' negative formulas using only $F_x$ and conjunction: 
%    \cbstart
    \begin{eqnarray*}
        \phi  & =  &  A_{LTL'} \mid N_{LTL'} \mid \phi\wedge\phi \mid F_x\phi
    \end{eqnarray*}
%    \cbend
    %\fb{what is this notation? it is not really standard.} \rui{ It is just to make the construction more visual, maybe we can replace it by BNF?} \fb{yep, BNF would be better.}
    
    %The occurrence of $\phi\wedge\phi$ and $F_x\phi$ can be possibly empty. That is, negative formulas and strongly positive formulas alone are also considered untied. 
    
    As before, LTL' Sahlqvist formulas are the conjunctions of negations of LTL' untied formulas.
\end{definition}

\section{Correspondence Theorem} \label{correspondence}

In this section we present the proof of the correspondence theorem for LTL. By embedding LTL Sahlqvist formulas into LTL' Sahlqvist formulas, we only need to show that the theorem holds for the latter. We start by showing that the translation $t$ from LTL to LTL' in Sec.~\ref{semantics} preserves Sahlqvist formulas. 
%translation from LTL to LTL'. 
Then we introduce the main lemma crucial to the theorem. Finally, a detailed proof of the theorem is provided. 

\subsection{Translation}

We show that LTL Sahlqvist formulas can be translated into LTL' Sahlqvist formulas. 

%\cbstart
\begin{lemma}
    Let $\tau$ be the translation from LTL to LTL' in Def.~\ref{translation}. Then the following claims are true: 
    \begin{enumerate}[$(1)$]
        \item The translation of an LTL untied formula w.r.t.~$\tau$ is an LTL' untied formula. 
        \item An LTL untied formula and its translation w.r.t.~$\tau$ are semantically equivalent. %\fb{this should immediately follow from the preservation result in Sec.~\ref{semantics}.}
    \end{enumerate}
\end{lemma}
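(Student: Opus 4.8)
The plan is to treat the two claims separately. Claim~$(2)$ is immediate: an LTL untied formula is in particular an LTL formula, so Lemma~\ref{translation_lemma} applies verbatim and yields semantic equivalence of the formula with its $\tau$-image. All the actual work is in claim~$(1)$, which I would establish by structural induction on the LTL untied formula $\phi$, after first recording an auxiliary fact proved by a side induction: \emph{$\tau$ maps every LTL positive formula to an LTL' positive formula}. This holds because LTL positive formulas are generated from atoms by $\wedge,\vee,G,F,X,U$ only, and $\tau$ sends these to $\wedge,\vee,G,F_x,X$ and to the compound $F_x(\tau(\phi_2)\wedge\widehat{G}_{w,x}\tau(\phi_1))$ respectively, each of which is admitted by the grammar of LTL' positive formulas. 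As a consequence, whenever $N_{LTL}=\neg P$ with $P$ positive, $\tau(N_{LTL})=\neg\tau(P)$ is an LTL' negative formula of the first form.

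Now the main induction. In the base case $\phi=A_{LTL}$, an LTL boxed formula is a string of operators from $\{X,G\}$ in front of an atom; since $\tau$ fixes atoms and commutes with $X$ and $G$, $\tau(A_{LTL})$ is again such a string, and as $\{X,G\}\subseteq\{G,\widehat{G}_{w,w'},X\}$ it is an LTL' boxed formula $A_{LTL'}$. The base case $\phi=N_{LTL}$ is exactly the consequence of the auxiliary fact noted above. In the inductive case $\phi=\phi_1\wedge\phi_2$, we have $\tau(\phi)=\tau(\phi_1)\wedge\tau(\phi_2)$, both conjuncts are LTL' untied by the induction hypothesis, and the untied grammar is closed under $\wedge$.

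The interesting inductive case is $\phi=\beta U\psi$, where $\beta$ is an LTL boxed formula or an LTL negative formula. By Def.~\ref{translation}, $\tau(\beta U\psi)=F_x\bigl(\tau(\psi)\wedge\widehat{G}_{w,x}\tau(\beta)\bigr)$. Since LTL' untied formulas are closed under $F_x$ and under $\wedge$, it suffices to show that $\tau(\psi)$ and $\widehat{G}_{w,x}\tau(\beta)$ are both LTL' untied. The former holds by the induction hypothesis. For the latter, if $\beta$ is an LTL boxed formula then, as in the base case, $\tau(\beta)$ is an LTL' boxed formula, and prepending the boxed operator $\widehat{G}_{w,x}$ yields again an LTL' boxed formula; if instead $\beta$ is an LTL negative formula then $\tau(\beta)$ is an LTL' negative formula of the first form, and $\widehat{G}_{w,x}\tau(\beta)$ is then an LTL' negative formula of the second form. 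In both cases $\widehat{G}_{w,x}\tau(\beta)$ is LTL' untied, which closes the induction. A minor bookkeeping point is that the variable $x$ introduced by $F_x$ must be chosen fresh, as stipulated by the Variable Convention, so that the subscript in $\widehat{G}_{w,x}$ is well formed.

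The step I expect to be the main obstacle is precisely this $\beta U\psi$ case: one has to check that the $\widehat{G}$ operator freshly introduced by the translation of $U$ does not escape the untied fragment. This works only because of the deliberate design of the LTL' syntax --- the second, ``$\widehat{G}$-closed'' shape of LTL' negative formulas, together with the fact that $\widehat{G}$ counts as a boxed operator --- so the proof is essentially an exercise in matching these closure properties to the clauses of $\tau$; the side induction on positive formulas, though routine, has to be in place first.
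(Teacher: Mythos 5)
Your proposal is correct and follows essentially the same route as the paper: claim (2) is discharged by Lemma~\ref{translation_lemma}, and claim (1) is a structural induction whose only nontrivial case is $\beta U\psi$, resolved by observing that $\widehat{G}_{w,x}\tau(\beta)$ is again boxed (resp.\ negative, via the second clause of the LTL' negative-formula definition) and hence untied. You are merely more explicit than the paper in spelling out the base cases and the auxiliary fact that $\tau$ preserves positivity, which the paper leaves implicit.
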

\begin{proof}
    
    \begin{enumerate}[$(1)$]
    \item The claim can be proved using structural induction on the formula. We only consider the case for the LTL untied formula $\phi = \beta U \psi$, where $\psi$ is also LTL untied. Let $w$ be any path, and $\phi$ is evaluated at $w$. By definition 2.4, $\tau(\phi)=F_x (\tau(\psi) \land \widehat{G}_{w,x} \tau(\beta))$. If $\beta$ is an LTL boxed formula, then $\tau(\beta)$ is also an LTL' boxed formula; so $\widehat{G}_{w,x} \tau(\beta)$ is also an LTL' boxed formula. If $\beta$ is an LTL negative formula, then $\tau(\beta)$ is an LTL' negative formula; so $\widehat{G}_{w,x} \tau(\beta)$ is also an LTL' negative formula. Therefore, $\widehat{G}_{w,x} \tau(\beta)$ is untied. By induction hypothesis, $\tau(\psi)$ is an LTL' untied formula, hence $F_x (\tau(\psi) \land \widehat{G}_{w,x} \tau(\beta))$ is LTL' untied. 
    
    %\item  \fb{I think we can remove this here and prove preservation in Sec.~\ref{semantics}. Also, it is better to prove semantical equivalence using the semantics rather than the standard translation.}
    
    \item It follows immediately from Lemma~\ref{translation_lemma}.
        \end{enumerate}
\end{proof}
%\cbend

Whenever two formulas are semantically equivalent, they have the same frame conditions. Therefore, having shown that for each LTL Sahlqvist formula, a semantically equivalent LTL' formula exists and is also Sahlqvist, we can conclude the following lemma:  
\begin{lemma}
    If every LTL' Sahlqvist formula locally corresponds to a first order formula, then every LTL Sahlqvist formula locally corresponds to a first order formula. 
\end{lemma}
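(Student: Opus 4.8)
The plan is to reduce this to the two facts about the translation $\tau$ of Def.~\ref{translation} that were just established, so no genuinely new machinery is needed. First I would fix an arbitrary LTL Sahlqvist formula $\phi$. By Def.~\ref{defSahlqvist} it has the shape $\bigwedge_{i=1}^{m}\neg\phi_i$, where each $\phi_i$ is an LTL untied formula. Since $\tau$ commutes with $\neg$ and with $\wedge$ (the first clauses of Def.~\ref{translation}, together with the variable-renaming conventions that keep $\tau(\phi)$ well-formed), we obtain $\tau(\phi)=\bigwedge_{i=1}^{m}\neg\tau(\phi_i)$. Applying part~(1) of the preceding lemma to each conjunct, every $\tau(\phi_i)$ is an LTL' untied formula; hence $\tau(\phi)$, being a conjunction of negations of LTL' untied formulas, is by definition an LTL' Sahlqvist formula.

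Next I would transfer correspondence along semantic equivalence. By part~(2) of the preceding lemma (equivalently, Lemma~\ref{translation_lemma} applied conjunct-by-conjunct), $\phi$ and $\tau(\phi)$ are true at exactly the same paths in exactly the same models: for every transition system $T$, assignment $h$, and path $w$ we have $(T,h,w)\vDash\phi$ iff $(T,h,w)\vDash\tau(\phi)$. Passing to the second-order standard translation, which as explained in Sec.~\ref{sec_translation} is obtained by universally quantifying the predicate symbols and whose truth over all assignments coincides with validity of the modal formula, this equivalence says precisely that $\forall Q_1\cdots\forall Q_k\, ST_w(\phi)$ and $\forall Q_1\cdots\forall Q_k\, ST_w(\tau(\phi))$ are equivalent at every $w$ in every $T$. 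Thus $\phi$ and $\tau(\phi)$ have the same local correspondent at each $w$. Now invoke the hypothesis: since $\tau(\phi)$ is LTL' Sahlqvist, it locally corresponds at $w$ to some first-order formula $\alpha_w$; by the equivalence just noted, $\phi$ also locally corresponds to $\alpha_w$ at $w$. As $w$ was arbitrary, $\phi$ locally corresponds to a first-order formula, as required.

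I do not expect a serious obstacle here; the one point needing care is the informal step ``semantically equivalent formulas have the same frame conditions,'' because local correspondence is defined in terms of the (second-order) standard translation at a fixed $w$ rather than in terms of the modal formulas directly. This gap is bridged exactly by the lemma of Sec.~\ref{sec_translation} relating semantics to the standard translation, together with the observation that assignments range over all subsets of $W$, so universal second-order quantification over the $Q_i$ matches the ``for all $h$'' reading of validity. The variable-renaming conventions attached to Def.~\ref{translation} must be respected to guarantee that $\tau(\phi)$ is a legitimate LTL' formula, but they play no role in the correspondence argument itself.
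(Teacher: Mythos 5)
Your proposal is correct and follows essentially the same route as the paper: translate via $\tau$, use the preceding lemma to see that the result is LTL' Sahlqvist and semantically equivalent, and transfer the first-order correspondent along that equivalence. The paper states this almost without proof (one sentence appealing to ``semantically equivalent formulas have the same frame conditions''), so your more careful treatment of the decomposition into conjuncts and of the passage through the second-order standard translation only fills in details the paper leaves implicit.
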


\subsection{Main Lemma}

In this section, we prove the main lemma, essential to the proof of the
correspondence theorem for LTL'. The LTL' untied formulas are solely built from boxed formula and negative formula, hence intuitively in order to find first-order correpondents for LTL' Sahlqvist formula $\phi$, it suffices to find an assignment $h_0$ that satisfies the following for every boxed formula $A$ and every negative formula $N$ in $\phi$:
%\fb{we could remind what $A$ and $N$ are.}
\begin{eqnarray*}
\exists Q, ST_w(A) & \iff & ST_w(A)[Q_0] \text{ and }\\
 \exists Q, ST_w(N) & \iff & ST_w(N)[Q_0]
\end{eqnarray*}
where $Q_0(x)$ is true iff $x\in h_0(q)$. $Q_0$ is called $\textit{minimal predicate}$. 

\begin{definition}[Substitution]
    We first fix the notation on substitution in the minimal assignment. Let $\phi(q)$ be a formula and $h_0(q)$ be its minimal assignment for atom $q$ (to be defined subsequently). Let $Q_0$ be its corresponding minimal predicate. Suppose $t$ to be a symbol occurring in the expression of $Q_0$. Then we use $[t'/t]Q_0$ to denote the substitution of $t'$ for all occurrences of $t$ in $Q_0$.
\end{definition}
%\cbend

We can now introduce the notion of minimal assignment.
\begin{definition}[Minimal assignment]
    Let $\phi(q_1,\ldots, q_k)$ be an LTL' untied formula; let $w$ be a path. For every variable $q_j$ occurring in $\phi$, we define the {\em minimal assignment} $h_0(q_j)$ of $\phi$ at $w$ by induction on the structure of formula. 
%    \vspace{0.2cm}
%    \noindent
    
    \textit{Base cases: }
    Suppose that  $\phi (q_j)$ is a boxed formula and its standard translation at $w$ is $\forall v (R_j(w,v) \to Q_j(v))$, then the minimal assignment for $q_j$ is $h_0(q_j) = \{u \in W \mid R_j(w,u)\}$. 
    
    Suppose $\phi$ is a negative formula, then $h_0(q_j) = \emptyset$ (and $Q_{j0}(w) \equiv \bot$ for every $w$).
    \vspace{0.2cm}
    
    \noindent
    \textit{Inductive cases:}
    
    If the minimal assignment for $\phi_1(q_1,\ldots, q_k)$ and $\phi_2(q_1,\ldots, q_k)$ are respectively $h_0^1$ and $h_0^2$, then the minimal assignment for $\phi_1\wedge\phi_2$ is $h_0^1 \cup h_0^2$. 
    
%    \cbstart
    If the minimal assignment for $\phi$ at $v$ is $h_0$, then the minimal assignment for $F_x\phi$ at $w$ is $[x/v]h_0$.
    %$h'_0 = (h_0 \setminus \{ v \}) \cup \{ x \}$ \fb{be careful with notation}. 

    If the minimal assignment for $\phi$ at $v$ is $h_0$, then the minimal assignment for $X\phi$ at $w$ is $[\textbf{S}(w)/v]h_0$. %\fb{again, be careful with notation.}
    
    Suppose the minimal assignment for $\phi$ at $v$ is $h_0$. The minimal predicates for $q_j$ occurring in $\phi$ is defined as $Q_{j0}(z) \Leftrightarrow z\in h_0(q_j)$. Then the minimal assignment $h_0'$ for $\widehat{G}_{s,s'}\phi$ at $w$ is defined as $h_0'(q_j) = \{y\in W \mid \exists x (s\leqslant x < s' \wedge [x/v]Q_{j0}(y))\}$ for every $q_j$. 
%    \cbend
\end{definition}
\begin{remark}
    The minimal assignment for an LTL untied formula can be obtained by translating it into an LTL' untied formula. 
\end{remark}

Let $A$ be of the form $\boxplus^n q$ and $w \vDash A$ iff $\forall x (R^n(w,x) \to Q(x))$. Let $Q_{0}(x)$ be $R^n(w,x)$, we claim that $\exists Q \forall x (R^n(w,x) \to Q(x))$ iff $\forall x (R^n(w,x) \to Q_{0}(x))$. The proof of this claim is immediate: the right hand side is always true; the right-to-left implication is also always true. %Let $N(q)$ be a negative formula and $S(Q)$ be the standard translation of $N[w]$. Then $\exists Q (\forall x (R^n(w,x) \to Q(x))\wedge S(Q))$ iff $\forall x (R^n(w,x) \to Q_0(x))\wedge S(Q_0)$ by Corollary~\ref{conversemonotonicity}. So 
%\cbstart 
It turns out that for every Sahlqvist formula, the recursive construction of the minimal assignment will always produce a first-order correspondent to its second-order translation. In particular, we need to show how the occurrences of negative formulas in a Sahlqvist formula can be given such first-order correspondents via minimal assignment. 
%\cbend
%That is the Sahlqvist correspondence theorem. 
%\vspace{0.2cm}

\begin{lemma}[Main Lemma] \label{main}
%    \cbstart 
    Let $E$ be an LTL' untied formula, $w$ is a state, and $h_0$ is the minimal assignment of $E$ at $w$ (possibly empty). Let $h$ be an assignment. If there exists an assignment $g$ and a state $w$ such that $[E]^g_w = 1$, then the following are equivalent: 
    \begin{enumerate}[$(a)$]
        \item For all $q_j \in \{q_1,\ldots, q_k\}$, $h_0(q_j) \subseteq h(q_j)$.
        \item $[B]^h_w = 1$.
    \end{enumerate}
    where $B$ is defined is obtained from $E$ by replacing all occurrences of negative formulas $N_1, N_2,\ldots, N_m$ in $E$ by $\top$. 
%    \cbend
\end{lemma}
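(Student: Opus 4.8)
The plan is to proceed by structural induction on the LTL' untied formula $E$, following the grammar $E = A_{LTL'} \mid N_{LTL'} \mid \phi\wedge\phi \mid F_x\phi$ (the $X$ case being analogous). Throughout I keep the hypothesis that some $g$ satisfies $[E]^g_w=1$; this ``consistency'' assumption is what lets us rule out the degenerate case where a boxed subformula has an unsatisfiable requirement and is needed to push the hypothesis down to the $F_x$ subformula (if $[F_x\phi]^g_w=1$, then $[\phi]^g_v=1$ for the witness $v$, and we replay the argument at $v$). I would set up the notation so that $B$ is exactly $E$ with every maximal negative subformula replaced by $\top$; note that because negative formulas only arise as leaves of the untied grammar (they are never decomposed further inside an untied formula), the replacement commutes with the inductive construction.

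First, the base cases. If $E = A_{LTL'}$ is a boxed formula $\boxplus^n q_j$, then by Lemma~\ref{boxed} its standard translation at $w$ is $\forall v(R^n(w,v)\to Q_j(v))$, and $h_0(q_j) = \{u : R^n(w,u)\}$ with $h_0(q_i)=\emptyset$ for $i\neq j$; here $B = E$ since there is no negative subformula. Then $(a)$ says exactly $\{u : R^n(w,u)\}\subseteq h(q_j)$, which is plainly equivalent to $[B]^h_w = [\boxplus^n q_j]^h_w = 1$; the other atoms impose no constraint. If $E = N_{LTL'}$ is negative, then $h_0$ is empty, so $(a)$ holds vacuously, and $B=\top$, so $[B]^h_w=1$ always — both sides are trivially true.

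For the inductive step, the conjunction case $E = \phi_1\wedge\phi_2$ uses $h_0 = h_0^1\cup h_0^2$ and $B = B_1\wedge B_2$: $(a)$ for $E$ is the conjunction of $(a)$ for $\phi_1$ and $\phi_2$ (since containment of a union is containment of each piece), and each of these is equivalent by IH to $[B_i]^h_w=1$ — one has to check the consistency hypothesis descends to both conjuncts, which it does since $[\phi_1\wedge\phi_2]^g_w=1$ gives $[\phi_i]^g_w=1$. The $F_x$ case is the crux: for $E = F_x\phi$, the minimal assignment at $w$ is $[x/v]h_0^\phi$ where $h_0^\phi$ is the minimal assignment of $\phi$ at the (quantified) successor $v$, and $B = F_x B_\phi$. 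Going from $(a)$ to $(b)$: if $h_0$ at $w$ is contained in $h$, substitute the witness $v$ guaranteed by $[F_x\phi]^g_w=1$ (here is where the consistency hypothesis is essential — we need an actual path $v$ with $w\leqslant v$ and $[\phi]^g_v=1$), conclude $h_0^\phi$ at $v$ is contained in $h$, apply the IH at $v$ to get $[B_\phi]^h_v=1$, hence $[F_x B_\phi]^h_w = [B]^h_w = 1$. Conversely, from $[B]^h_w=1$ pick a witness $v'$ with $[B_\phi]^h_{v'}=1$; the subtlety is that $h_0(q_j)$ at $w$ was defined by existentially quantifying over \emph{all} admissible $v$, so to recover $(a)$ at $w$ I must show the substituted minimal predicate at $w$ is satisfied — this is where I expect to lean on monotonicity (Corollary~\ref{conversemonotonicity} and Lemma~\ref{monotonicity}) and on the precise bookkeeping of the substitution $[x/v]$ to argue that the witness from $B$ feeds back correctly. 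The $\widehat{G}_{s,s'}$-prefixed subcase inside boxed formulas is handled by the same substitution pattern, using the explicit form $h_0'(q_j) = \{y : \exists x(s\leqslant x<s'\wedge [x/v]Q_{j0}(y))\}$ from the definition of minimal assignment.

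The main obstacle I anticipate is precisely the $(b)\Rightarrow(a)$ direction in the $F_x$ and $\widehat{G}$ cases: because the minimal assignment was built by \emph{existentially quantifying} the witness path into the predicate, one has to verify that a witness making $B$ true can be uniformly chosen so that the \emph{minimal} predicate — not just some larger predicate — is below $h$. Getting the quantifier bookkeeping right (which variable is bound where, and the interaction with the Variable Convention for renaming) is the delicate part; the monotonicity results and the Boxed Formulas Lemma do the real semantic work, but the care lies in the substitution notation $[t'/t]Q_0$.
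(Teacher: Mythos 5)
Your overall strategy coincides with the paper's: structural induction on the untied grammar, with the boxed and negative formulas as base cases (the negative case trivial on both sides, the boxed case reducing $(a)$ to $\forall v(R^n(w,v)\to Q(v))$ via Lemma~\ref{boxed}), and conjunction handled through $h_0=h_0^1\cup h_0^2$. Those parts are correct and match the paper essentially verbatim. Two small structural quibbles: there is no top-level $X$ case and no top-level $\widehat{G}_{s,s'}$ case in the untied grammar --- both operators occur only inside boxed (or negative) formulas, so they are already absorbed into the base case and do not need separate inductive treatment.

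The genuine gap is the one you yourself flag and then leave open: the $(b)\Rightarrow(a)$ direction for $E=F_x\phi$. Saying you ``expect to lean on monotonicity and on the bookkeeping of the substitution $[x/v]$'' is a statement of intent, not an argument, and this is precisely the step where the lemma could fail: the minimal assignment of $F_x\phi$ at $w$ is obtained by substituting the bound variable $x$ into $h_0^\phi$, so a priori it involves all admissible successors of $w$, whereas $[F_xB_\phi]^h_w=1$ only produces a single witness $v'$ with $[B_\phi]^h_{v'}=1$; the induction hypothesis at $v'$ then gives you containment of the minimal assignment \emph{computed at $v'$}, and you must still argue that this is the same object as (or suffices for) the minimal assignment of $F_x\phi$ at $w$. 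The paper closes this by observing that the equivalence $(1)\Leftrightarrow(2)$ supplied by the induction hypothesis holds for an arbitrary successor state, so it can be instantiated at whichever witness $h$ provides, and the substitution $[x/v]$ in the definition of the minimal assignment is read as binding $x$ to that witness. You need to actually carry out this instantiation (and say explicitly how the existential over $x$ in the minimal predicate is discharged by the witness), rather than defer it; as written, the crux of the lemma is asserted but not proved. Monotonicity (Corollary~\ref{conversemonotonicity}) is not what does the work here --- it is used later, in the correspondence theorem, to handle the negative conjuncts; inside the Main Lemma the negative formulas have already been replaced by $\top$.
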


\begin{proof}
    We proceed by induction on the structure of the formula. 
    
    For the base cases, we suppose that $E$ is either an LTL' boxed formula $A$ or an LTL' negative formula $N$. If $E$ is a negative formula $N$, then $h_0$ is empty, therefore ($a$) must be true. As $B$ becomes $\top$, ($b$) is true, hence ($a$) and ($b$) are  equivalent. If $E$ is a boxed formula $A$, then $B = A$. As only one atom appears in $A$, let it be $q$. Since $A(q)$ is true at $w$, $Q_0(x)$ is $R(w,x)$ where $R$ is obtained from the standard translation of $A(q)$. As $h(q) = \{x \in W \mid Q(x)\}$ and $h_0(q) = \{x \in W \mid Q_0(x)\} = \{x \in W \mid R(w,x)\}$, $h_0(q) \subseteq h(q)$ is therefore just saying that for all $x$, $(R(w,x) \to Q(x))$. But this is exactly what ($b$) says. Namely, $[E]^h_w = 1$ iff $w \vDash A$ iff $\forall x (R(w,x) \to Q(x))$. Therefore ($a$) and ($b$) are equivalent.
    \vspace{0.2cm}
    
    There are two cases for inductive steps: $E_1\wedge E_2$, $F_xE$. 
    \vspace{0.2cm}
    
    \noindent
    \textit{Case $(i)$:} Suppose that there is an assignment $g$ making $E = E_1\wedge E_2$ true at $w$. Then $g$ also makes both $E_1$ and $E_2$ true at $w$. By induction hypothesis, $(a)\Leftrightarrow (b)$ holds for both $B_1$ and $B_2$. Let $h$ be an arbitrary assignment. Let $h_0^1, h_0^2, h_0$ denote the minimal assignments for $E_1, E_2, E$. We know $h_0^1, h_0^2 \subseteq h_0$. Also, for every atomic formula $q_j$ in $E$, it must be in either $E_1$ or $E_2$. Thus if for every $q_j$, $h_0(q_j)\subseteq h(q_j)$, then $[B_1]^h_w = [B_2]^h_w = 1$. So $(a)\Rightarrow (b)$ holds for $B$. Now assume $[B]^h_w=1$. As before, $[B_1]^h_w=1$ and $[B_2]^h_w=1$. So for every atom $q_j$, if $q_j$ occurs in $B_i$, then $h_0^i(q_j)\subseteq h(q_j)$ ($i\in\{1,2\}$). By definition of $h_0$, it is also the case that $h_0^1(q_j)\cup h_0^2(q_j) = h_0(q_j)\subseteq h(q_j)$ for every $q_j$. Hence $(b)\Rightarrow (a)$ holds for $B$. 
    \vspace{0.2cm}
    
    \noindent
    \textit{Case $(ii)$:} Suppose that there is an assignment $g$ making $F_xE$ true at $w$. Then there is a state $x \geqslant w$ at which $g$ makes $E$ true. By induction hypothesis, $(1)$ and $(2)$ are equivalent:
    \begin{enumerate}[$(1)$]
        \item For every atomic variable $q_j$ in $E$, $h_0(q_j) \subseteq h(q_j)$.
        \item $[B]^h_x=1$. 
    \end{enumerate}
    If $[B]^h_x=1$, then $[F_xB]^h_w=1$; so $(1)\Rightarrow [F_xB]^h_w=1$. Also, if $(a)\Rightarrow [F_xB]^h_w=1$, then there is a state $y$ at which $[B]^h_y=1$. Since $x$ in $(1)\Leftrightarrow (2)$ is arbitrary, we get $(1)$ back. Therefore, $(a)\Leftrightarrow (b)$ holds for $F_xB$.
%    \vspace{0.3cm}
\end{proof}

\subsection{Correspondence Theorem}

It finally only remains to show that all LTL' Sahlqvist formulas $S$ have first-order correspondents. 
\begin{theorem}[LTL Correspondence Theorem]
    Let $S$ be an LTL' Sahlqvist formula, then the local correspondent of $S[w]$ can be expressed in first-order terms, i.e., $\forall Q_1,\ldots, \forall Q_k, ST_w(S(q_1,\ldots, q_k))$ has a first-order correspondent.
\end{theorem}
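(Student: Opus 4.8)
The plan is to peel off the outer structure of a Sahlqvist formula until the Main Lemma applies. Since an LTL' Sahlqvist formula $S$ is a conjunction $\bigwedge_{i=1}^{m}\neg E_i$ of negations of LTL' untied formulas, and since the standard translation commutes with the Boolean connectives (Def.~\ref{sttranslation}), we have $ST_w(S)=\bigwedge_i \neg ST_w(E_i)$; as universal quantification distributes over conjunction, $\forall Q_1\cdots\forall Q_k\, ST_w(S)$ is logically equivalent to $\bigwedge_i\,\forall Q_1\cdots\forall Q_k\,\neg ST_w(E_i)$. Hence it suffices to prove, for a single LTL' untied formula $E$, that $\exists Q_1\cdots\exists Q_k\, ST_w(E)$ is equivalent, over transition systems, to a first-order formula; negating this and conjoining over $i$ then yields the correspondent of $S$.

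For a fixed untied $E$ and path $w$, let $h_0$ be the minimal assignment of $E$ at $w$ and let $Q_{10},\ldots,Q_{k0}$ be the corresponding minimal predicates. I would prove the key equivalence
\[
  \exists Q_1\cdots\exists Q_k\, ST_w(E)\quad\Longleftrightarrow\quad ST_w(E)[Q_{10}/Q_1,\ldots,Q_{k0}/Q_k].
\]
The right-hand side is first-order: by Lemma~\ref{boxed} the accessibility relations $R^n$ occurring in the minimal predicates are first-order definable from $\leqslant$, $<$ and $\mathbf{S}$, the clauses of Def.~\ref{sttranslation} introduce no second-order quantifiers (only the existentials coming from the $F_x$-operators of $E$), and after the substitution no predicate symbol $Q_j$ survives. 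The direction $\Leftarrow$ is essentially immediate: fix witnesses $\bar{x}$ for those existentials making $ST_w(E)[\vec{Q}_0]$ true and interpret each $Q_j$ as the set defined by $Q_{j0}$ at $\bar{x}$; with the same witnesses this valuation satisfies $ST_w(E)$.

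The substance is the direction $\Rightarrow$, which is exactly where the Main Lemma (Lemma~\ref{main}) is used. Suppose $[E]^g_w=1$ for some assignment $g$, and choose witnesses $\bar{x}$ for the $F_x$-subformulas of $E$ under $g$, so that every boxed leaf and every negative formula $N_1,\ldots,N_m$ of $E$ is satisfied by $g$ at the position determined by $\bar{x}$. Write $B$ for the positive skeleton of $E$ (negative formulas replaced by $\top$): by Lemma~\ref{monotonicity}, $[E]^g_w=1$ forces $[B]^g_w=1$, so the implication $(b)\Rightarrow(a)$ of Lemma~\ref{main} gives $h_0(q_j)\subseteq g(q_j)$ for all $j$ (concretely, each boxed leaf forces $g(q_j)$ to contain its accessibility set, and $h_0(q_j)$ is the union of these sets together with the empty contributions of the negative occurrences). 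Conversely, the implication $(a)\Rightarrow(b)$ of Lemma~\ref{main} applied with $h=h_0$ yields $[B]^{h_0}_w=1$, i.e.~the minimal predicates satisfy every boxed leaf at $\bar{x}$. Finally, since $h_0(q_j)\subseteq g(q_j)$ for all $j$ and negative formulas are antitone in their atoms (Corollary~\ref{conversemonotonicity}), each negative formula of $E$, being satisfied by $g$ at $\bar{x}$, is also satisfied by the minimal predicates at $\bar{x}$. Combining the boxed leaves and the negative formulas with the common witnesses $\bar{x}$ gives $[E]^{h_0}_w=1$, which is precisely $ST_w(E)[\vec{Q}_0]$.

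The main obstacle is the direction $\Rightarrow$ above: one must keep the existential witnesses for the $F_x$-operators coherent between the ``positive'' and the ``negative'' parts of $E$, and make precise that the minimal assignment lies pointwise below every satisfying assignment so that antitonicity of negative formulas can be invoked. This is exactly the bookkeeping the Main Lemma is designed to absorb, so the remaining care is confined to the path-variable renamings of the Variable Convention and to the degenerate case in which $E$ is unsatisfiable at $w$, where both sides of the key equivalence are false.
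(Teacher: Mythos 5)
Your proposal is correct and follows essentially the same route as the paper: reduce $S=\bigwedge_i\neg E_i$ to the single key equivalence $\exists\vec{Q}\,ST_w(E)\Leftrightarrow ST_w(E)[\vec{Q}_0]$, prove $\Leftarrow$ by instantiation, and prove $\Rightarrow$ by using the Main Lemma to sandwich the minimal assignment below any satisfying assignment and Corollary~\ref{conversemonotonicity} to transfer the negative conjuncts to $h_0$. The only (inessential) difference is that you invoke the Main Lemma once globally where the paper re-runs a structural induction on $E$ ($F_y$ and $\wedge$ cases) inside the theorem's proof; your explicit attention to keeping the $F_x$-witnesses shared between the boxed and negative parts, and to the unsatisfiable case, matches or slightly sharpens the paper's treatment.
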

\begin{proof}

Let $S = \bigwedge_{i=1}^m \neg E_i$ where $E_i$ are LTL' untied formulas. %\fb{we can define Sahlqvist formulas without conjunction} \rui{I added a note saying that we can work with each conjunctive clause separately}. 
The second order standard translation of $S[x]$ is $\bigwedge_{i=1}^m \forall Q_1, \forall Q_2,\ldots, \forall Q_k, \neg ST_x(E_i)$. However, to simplify the task, we can work with each conjunctive clause $E_i$ individually. In addition, we are going to work with the first correspondence of its negation: 
\begin{equation*}
    \exists Q_1, \exists Q_2,\ldots, \exists Q_k, ST_x(E_i)
\end{equation*}
We proceed by induction on the complexity of the formula. 
%\vspace{0.2cm}

\noindent
\textit{Base case: } Let's write the formula for the base case as follows:
\begin{equation*}
    \bigwedge\limits_{j=1}^m \oplus_{a_j}C_j 
\end{equation*}
where $C_j$ is either an LTL' boxed formula or an LTL' negative formula, and  $a_j$ is the number of $F_x$ appearing in front of each $C_j$.

For each $j$, if $C_j$ is a boxed formula, then the standard translation of $\oplus_{a_j}C_j[x]$ can be written as
\begin{equation*}
    \exists x_1, \ldots, \exists x_{a_j} ( R_{j1}(x,x_{j1}) \wedge \ldots \wedge R_{ja_j}(x_{a_j-1},x_{a_j}) \wedge (\forall y (R_j(x_{a_j},y) \to Q_j(y))))
\end{equation*}

However, if $C_j$ is a negative formula, we do not need to write down the standard translation of $\oplus_{a_j}C_j[x]$. We can omit $\oplus_{a_j}$ because it can be part of the LTL' negative formula. Therefore, the standard translation of $E_i[x]$ can be written as
\begin{equation} \label{eq11}
%\begin{split}
    \bigwedge\limits_{f=1}^t (\exists x_1, \ldots, \exists x_{a_f} \bigwedge\limits_{s=1}^{a_f}R_{fs}(x_{s-1},x_s) \wedge (\forall y (R_f(x_{a_f},y) \to Q_f(y)))) \wedge \bigwedge\limits_{l=1}^r ST_x(N_l)
%\end{split}
\end{equation}
where $t+r = m$. Here, $C_1,\ldots, C_m$ are $A_1,\ldots, A_t, N_1,\ldots, N_r$.

For the first conjunct of this formula, the following two formulas are equivalent by definition of minimal assignments, where $Q_{f0}$ is the minimal predicate of the atomic variable $Q_f$:
%\cbstart
{\small
\begin{eqnarray}
  & &   \exists Q_1, \ldots, Q_k \bigwedge\limits_{f=1}^t (\exists x_1, \ldots, x_{a_f} \bigwedge\limits_{s=1}^{a_f}R_{fs}(x_{s-1},x_s) \wedge (\forall y  (R_f(x_{a_f},y) \to Q_f(y))))  \ \ \ \  \ \label{eq3}\\
    %\label{eq3} & &\\
& &     \bigwedge\limits_{f=1}^t (\exists x_1, \ldots, \exists x_{a_f} \bigwedge\limits_{s=1}^{a_f}R_{fs}(x_{s-1},x_s) \wedge (\forall y (R_f(x_{a_f},y) \to Q_{f0}(y)))) \label{eq4} 
    %\label{eq4} & &
\end{eqnarray}
}
%\cbend

For the second conjunct of (\ref{eq11}) $\bigwedge_{l=1}^r ST_x(N_l)$, notice that $E_i$ satisfies the condition of the main lemma: namely, there exists an assignment under which it is satisfied at $x$. Also, let $h$ be an arbitrary assignment, if $[E_i]^h_x = 1$, then $[B_i]^h_x = 1$, where $B_i$ is obtained from $E_i$ by substituting $\top$ for every occurrence of negative formulas in $E_i$. It follows that for all $q_j \in \{q_1,\ldots, q_k\}$, $h_0(q_j) \subseteq h(q_j)$. As $N_l$ are negative formulas, by the monotonocity lemma for negative formulas (Corollary~\ref{conversemonotonicity}), $h(N_l) \subseteq h_0(N_l)$. Therefore, for all $h$, if $[N_l]^h_x=1$, then $[N_l]^{h_0}_x=1$. From this, we can easily prove that (\ref{eq1}) and (\ref{eq2}) below are equivalent. 
\begin{eqnarray}
    \exists Q_1, \exists Q_2,\ldots, \exists Q_k, \bigwedge\limits_{l=1}^r ST_x(N_l) \label{eq1}\\
%\end{}
%\begin{equation}
    \bigwedge\limits_{l=1}^r ST_x(N_l)[Q_{10}, Q_{20},\ldots, Q_{k0}] \label{eq2}
\end{eqnarray}

From the equivalence (\ref{eq3}) $\wedge$ (\ref{eq1})$\iff$(\ref{eq4}) $\wedge$ (\ref{eq2}),  %\fb{please use labels and refs, as otherwise these references will change and make the proof unintelligible.}, 
$(1)$ obtains its first order correspondent by substituting minimal predicate $Q_{10},\ldots, Q_{k0}$ for $Q_1,\ldots, Q_k$, hence the quantifiers over them can also be dropped. Therefore, $\forall Q_1,\ldots, \forall Q_k, ST_x(S) \equiv \bigwedge_{i=1}^m \neg (\exists Q_1,\ldots, \exists Q_k, ST_x(E_i))$ also has first-order correspondent $\bigwedge_{i=1}^m \neg ST_x(E_i)[Q_{10}/Q_1, \ldots, Q_{k0}/Q_k]$. %where $\phi[Q/Q']$ means substituting $Q$ for $Q'$ in $\phi$.
\vspace{0.2cm}

Now we proceed to the inductive steps. %As in Lemma~\ref{main}, t
There are two cases: 

%\vspace{0.2cm}
\noindent
\textit{Case 1: } Suppose the untied formula $E$ is of the form $F_yC$, where $C$ is an untied formula. If $E$ is true at the state $x$, then there is a state $y$ such that 
%there is 
$x\leqslant y$ and $C$ is true at $y$. By induction hypothesis, we can find the minimal predicates for $C$ which is $\Vec{Q_0^C} = Q_{10},\ldots, Q_{k0}$ such that 
$$\exists \Vec{Q} ST_y(C) \Leftrightarrow [\Vec{Q_0^C}/\Vec{Q}] ST_y(C)$$
Since the minimal predicate for $E$ is $Q_0^E = [x/y]\Vec{Q_0^C}$, we get
$$\exists \Vec{Q} ST_x(E) \Leftrightarrow [\Vec{Q_0^E}/\Vec{Q}] ST_x(E)$$
%\vspace{0.3cm}

\noindent
\textit{Case 2: } Suppose $E = E_1\wedge E_2$ where $E_1$ and $E_2$ are both untied. Let $q_1,\ldots,q_k$ be the atoms appearing in both $E_1$ and $E_2$. Then by induction hypothesis, we have two sets of minimal predicates $\{Q_{10}^1,\ldots, Q_{k0}^1\}$ and $\{Q_{10}^2,\ldots, Q_{k0}^2\}$. The minimal predicates for $E$ are defined as 
$$Q_{j0} = Q_{j0}^1\vee Q_{j0}^2$$
By induction hypothesis, we know that 
$$\exists \Vec{Q} ST_x(E_1) \Leftrightarrow [\Vec{Q_0^1}/\Vec{Q}] ST_x(E_1)$$
$$\exists \Vec{Q} ST_x(E_2) \Leftrightarrow [\Vec{Q_0^2}/\Vec{Q}] ST_x(E_2)$$
We want to show that
$$\exists \Vec{Q} ST_x(E) \Leftrightarrow [\Vec{Q_0}/\Vec{Q}] ST_x(E)$$

$(\Rightarrow):$ Assume $\exists \Vec{Q} ST_x(E)$. Then, $\exists \Vec{Q} ST_x(E_1)$ and $\exists \Vec{Q} ST_x(E_2)$ hold. So both $[\Vec{Q_0^1}/\Vec{Q}] ST_x(E_1)$ and $[\Vec{Q_0^2}/\Vec{Q}] ST_x(E_2)$ are the case. Since $h_0^1\subseteq h_0$, $[\Vec{Q_0}/\Vec{Q}] ST_x(E_1)$ is also true. Similarly, so is $[\Vec{Q_0}/\Vec{Q}] ST_x(E_2)$. Therefore, $[\Vec{Q_0}/\Vec{Q}] ST_x(E)$.

%$(\Leftarrow):$ Suppose $[\Vec{Q_0}/\Vec{Q}] ST_x(E)$. Then, both $[\Vec{Q_0}/\Vec{Q}] ST_x(E_1)$ and $[\Vec{Q_0}/\Vec{Q}] ST_x(E_2)$ are the case. Since $h_0^1 \subseteq h_0$ and $h_0^1$ is minimal for $E_1$, by Lemma~\ref{main}, $[\Vec{Q_0^1}/\Vec{Q}] ST_x(E_1)$ is true. So $\exists \Vec{Q} ST_x(E_1)$. Similarly, $\exists \Vec{Q} ST_x(E_2)$ holds. Hence, $\exists \Vec{Q} ST_x(E_1\wedge E_2)$.

$(\Leftarrow):$ $\Vec{Q_0}$ is an instance of $\Vec{Q}$. 
%Suppose $[\Vec{Q_0}/\Vec{Q}] ST_x(E)$. Then, both $[\Vec{Q_0}/\Vec{Q}] ST_x(E_1)$ and $[\Vec{Q_0}/\Vec{Q}] ST_x(E_2)$ are the case. Since $h_0^1 \subseteq h_0$ and $h_0^1$ is minimal for $E_1$, by Lemma~\ref{main}, $[\Vec{Q_0^1}/\Vec{Q}] ST_x(B_1)$ is true, where $B_1$ is obtained from $E_1$ by substituting $\top$ for all occurrences of negative formulas $N_1,\ldots,N_m$ occurring in $E_1$. Moreover, by Corollary~\ref{conversemonotonicity}, if there is an assignment that makes $N_1,\ldots,N_m$ true at some states, then the minimal assignment $h_0^1$ also makes them true at the same states; so $[\Vec{Q_0^1}/\Vec{Q}] ST_x(E_1)$ is true. Therefore, $\exists \Vec{Q} ST_x(E_1)$ holds. Similarly, $\exists \Vec{Q} ST_x(E_2)$ also holds. Hence, $\exists \Vec{Q} ST_x(E_1\wedge E_2)$.

\end{proof}

%\begin{remark}
 %   The same correspondence proof can be applied to Sahlqvist formulas of standard modal logics. \fb{is this really informative?} 
%\end{remark}

This concludes the proof of the Sahlqvist correspondence theorem for LTL'. First-order correspondents for LTL can be found by first translating the LTL Sahlqvist formulas into LTL'.

%\section{Descriptive assignment}
%\section{Intersection Lemma}
%\section{Persistence theorem for LTL Sahlqvist formula}

\subsection{Example}

The above proof of the correspondence theorem also yields an algorithm for translating the frame condition of an LTL Sahlqvist formula into a first order formula. We do not elaborate the algorithm here. But the algorithm for the Sahlqvist formula for standard modal logic applies with appropriate modification. Let's see an example for LTL involving the Until operator. 
%\cbstart 
Let $\varphi = \neg (\neg q U q)$, readers can easily verify that it is an LTL Sahlqvist formula. 
%\cbend 
The standard translation of $\varphi [w]$ is 
$$\neg\exists Q(\exists v, w \leqslant v \wedge (\forall u, v\leqslant u \to Q(u))\wedge (\forall z, w\leqslant z < v \to \neg Q(z)))$$
Taking the minimal assignment $Q(x)\equiv v\leqslant x$, we reduce the $ST_w(\varphi)$ to 
\begin{equation} \label{example}
\neg (\exists v, w \leqslant v)
\end{equation}
Formula ($\ref{example}$) identifies the empty class of structures, as there exists no class of frames over which formula ($\ref{example}$) can be true at any state. 
%\rui{Is this last claim correct?}

\section{Conclusions}

In this paper we introduced a notion of Sahlqvist formula for the Linear-time Temporal Logic LTL and proved a 
%This paper is only about the Sahlqvist formulas and 
Sahlqvist correspondence theorem for this language. In some respects, they can be viewed as a generalization of the same result for standard modal logic, in the sense that we allow states to index temporal operators $F_x$ and $G_{x,x'}$. 
%\fb{what does it means?}. 
One should also remark that LTL' Sahlqvist formulas are in fact very similar to the Sahlqvist formulas of standard modal logic to the extent that the proof for the completeness property \cite{BlackburnRV02,GabbayHR94} for Sahlqvist formulas almost identically applies to the LTL' Sahlqvist formulas.

%\fb{please fix the missing references.}

Further research direction may consists in finding an even larger class of LTL Sahlqvist formulas. For standard modal logic, Chagrova \cite{Chagrova91} has proved that it is undecidable if an arbitrary formula has a first-order correspondent. Therefore, the same problem is equally undecidable for LTL as the latter is strictly more expressive than the former. 

\newpage

%\bibliographystyle{aiml22}
%\bibliography{aiml22}

\end{document}